\definecolor{blu3}{rgb}{.1,.0,.4}
\newtheorem*{OVC}{Orthogonal Vectors Conjecture}
\newtheorem*{SETH}{Strong Exponential Time Hypothesis}
\newcommand{\RR}{\ensuremath{\mathbb R}}  
\newcommand{\ZZ}{\ensuremath{\mathbb Z}}  
\newcommand{\Hout}{\ensuremath{H_{\rm out}}}  
\def\DEF#1{\textbf{\emph{#1}}}
\def\eps{\varepsilon}
\DeclareMathOperator{\interior}{int}
\DeclareMathOperator{\Holes}{holes}
\DeclareMathOperator{\tw}{tw}
\DeclareMathOperator{\diam}{diam}
\newcommand{\dart}[2]{#1 \shortrightarrow #2}
\definecolor{defblue}{rgb}{0.1,0.4,0.6} 
\let\emph\relax\DeclareTextFontCommand{\emph}{\color{defblue}\em}
\newcommand{\DSinit}{\ensuremath{\textsc{Init}}}
\newcommand{\DSadd}{\ensuremath{\textsc{Add}}}
\newcommand{\DSmin}{\ensuremath{\textsc{Min}}}
\title{Testing Whether a Subgraph is Convex or Isometric}
\author{Sergio Cabello}{Faculty of Mathematics and Physics, University of Ljubljana, Ljubljana, Slovenia \and Institute~of~Mathematics, Physics and Mechanics, Ljubljana, Slovenia}{sergio.cabello@fmf.uni-lj.si}{0000-0002-3183-4126}{}
\authorrunning{Cabello} 
\keywords{convex subgraph, isometric subgraph, plane graph}
\begin{document}

\maketitle

\begin{abstract}
	We consider the following two algorithmic problems:
	given a graph $G$ and a subgraph $H\subseteq G$, 
	decide whether $H$ is an isometric or a geodesically convex subgraph of $G$.
	It is relatively easy to see that the problems can be solved by computing
	the distances between all pairs of vertices.
	We provide a conditional lower bound showing that, 
	for sparse graphs with $n$ vertices and $\Theta(n)$
	edges, we cannot expect to solve the problem in $O(n^{2-\eps})$ time 
	for any constant $\eps>0$. We also show that the problem can be solved
	in subquadratic time for planar graphs and in near-linear time for graphs of 
	bounded treewidth. Finally, we provide a near-linear time algorithm for the
	setting where $G$ is a plane graph and $H$ is defined by a few cycles in $G$.
\end{abstract}

\section{Introduction}

Isometric and geodesically convex subgraphs\footnote{There are several distinct 
concepts of convexity in graphs~\cite{ChangatMS05,Pel13}. Since in this work we only talk about geodetic 
convexity, we will drop the adjective ``geodesically''.} are two natural and fundamental 
concepts in the area of metric graph theory~\cite{BC08,Pel13}. 
Our objective in this paper is to study the algorithmic problem of recognizing whether 
a given subgraph is convex or isometric. 

\subparagraph{Setting.}
Let $G$ be an undirected graph with abstract, positive edge 
lengths $\ell:E(G)\rightarrow \RR_{>0}$.
We call $G$ the \DEF{host graph} and assume henceforth that it is \emph{connected}.
The \DEF{length} of a walk (or path) $\pi$ in $G$, denoted by $\ell(\pi)$, 
is the sum of the lengths of its edges; thus $\ell(\pi)= \sum_{e\in E(\pi)} \ell(e)$,
where the sum is with multiplicity.
For any two vertices $x,y\in V(G)$, a \DEF{shortest path} from $x$ to $y$ 
is a minimum-length path connecting $x$ and $y$,
and the \DEF{distance} between $x$ and $y$ in $G$, denoted by $d_G(x,y)$,
is the length of a shortest path from $x$ to $y$.
  
For any two vertices $x,y$ of $G$, the \DEF{interval} $I_G(x,y)$ is the subgraph 
of $G$ defined by the union of all shortest paths from $x$ to $y$. Formally
\begin{align*}
	V(I_G(x,y)) &= \{ u\in V(G)\mid d_G(x,u)+d_G(u,y)= d_G(x,y) \}.\\
	E(I_G(x,y)) &= \{ uv\in E(G)\mid d_G(x,u)+\ell(uv)+ d_G(v,y)= d_G(x,y) \text{ or }\\
			   &\phantom{\{ uv\in V(G)\mid}~~~~~ d_G(x,v)+\ell(uv)+ d_G(u,y)= d_G(x,y) \}.
\end{align*}
A subgraph $H$ of $G$ is (geodesically) \DEF{convex} (in $G$) if and only if
\[
	\forall x,y\in V(H):~~ I_G(x,y)\subseteq H,
\]
and it is \DEF{isometric} (in $G$) if and only if
\[
	\forall x,y\in V(H):~~ d_G(x,y) = d_H(x,y).
\]
We consider the following algorithmic problems:
\begin{itemize}
	\item Given a host graph $G$ and a subgraph $H\subseteq G$, decide whether $H$ is an isometric subgraph.
	\item Given a host graph $G$ and a subgraph $H\subseteq G$, decide whether $H$ is a convex subgraph.
\end{itemize}

In graph theory, the concepts of isometric and (geodesically) convex subgraphs 
are defined for the setting where $\ell(e)= 1$ for all $e\in E(G)$.
In this unit-length setting, only induced subgraphs are considered because
non-induced graphs cannot be isometric nor convex.
In the general case with edge lengths, we may have $d_G(x,y)< \ell(xy)$ for some edges $xy$ of $G$.

Henceforth, we use $n$ and $m$ for the number of vertices and edges of the host graph $G$.

\subparagraph{Related work.}
Dourado et al.~\cite{DouradoGKPS09} provide an efficient algorithm to test convexity
in the case of unit-length edges.
Their approach can be easily adapted to graphs with arbitrary edge lengths, as follows.
For each vertex $x$ of the subgraph $H$, one computes the graph
$H_x=\bigcup_{y\in V(H)}I_G(x,y)$. The subgraph $H$ is convex in $G$ if and only if 
$H_x$ is contained in $H$ for all $x\in V(H)$.
The computation of $H_x$ takes $O(m)$ time once we have a shortest-path tree in $G$ 
from $x$, and testing whether $H_x\subseteq H$ also takes $O(m)$ time. 

We conclude that testing whether a subgraph is convex can be done
by computing a shortest-path tree from each vertex, the so-called
all-pairs shortest-path (APSP) problem, followed by tests that take $O(nm)$ time.
With a simple modification, the same time bound holds for testing whether the given 
subgraph is isometric. The running time to solve the APSP problem depends 
on the assumptions about the edge lengths:
\begin{itemize}
	\item for unit edge lengths, the APSP problem is solved in $O(nm)$ time 
	using a simple BFS;
	\item when the edge lengths are natural numbers and multiplications take
	constant time, Thorup~\cite{Thorup99} solves the APSP problem in $O(nm)$ time
	using the word RAM model of computation;
	\item if we only perform additions and comparisons of edge lengths,
	then the algorithm of Pettie and Ramachandran~\cite{PettieR02} for APSP
	takes $O(nm\,\alpha(m,n))$ time. Using Dijkstra's algorithm with 
	Fibonacci heaps~\cite{FredmanT87}, the running time is slightly worse, $O(n^2\log n + nm)$.
\end{itemize}
We conclude that testing whether a subgraph is convex or isometric can be done
in $\tilde O(nm)$ time in all situations. 

Convex subgraphs play an important role in the theory of \emph{median 
graphs}~\cite{Chepoi1988,MulderS79,Mulder78,Nebesky71}, a class of graphs that has received
substantial attention~\cite{BC08,HammackIK,KlavzarM99}.
For median graphs, convex sets and so-called \emph{gated sets} are the same concept;
see for example~\cite[Lemma 3.3]{BeneteauCCV22}. This can be used to test convexity in 
linear time: it suffices to check whether 
every vertex outside $H$ has at most one neighbor in $H$.
Imrich and Klav{\v z}ar~\cite{ImrichK98} provided a characterization of convex 
subgraphs in bipartite graphs in terms of $\Theta$-classes.
The characterization is for unit edge lengths and leads to an algorithm taking $O(nm)$ time.
They used it to obtain a fast algorithm to recognize median graphs. 
See also~\cite{GlantzM17}.
It was later shown that the recognition of median graphs
is closely related to the detection of triangles in graphs~\cite{ImrichKM99}.
If the host graph $G$ is the Cartesian product of graphs, 
$G=G_1\square\dots \square G_k$, and the edges have unit length,
a subgraph $H\subseteq G$ is convex if and only if $H=H_1\square\dots \square H_k$,
where each $H_i$ is a convex subgraph of $G_i$; see~\cite[Lemma 6.5]{HammackIK}.
Together with linear-time algorithms for factorization of graphs~\cite{ImrichP07}, 
this leads to an efficient algorithm for Cartesian products of 
substantially smaller graphs.
Convex subgraphs are being considered also in more applied 
areas; see for example~\cite{GavrilevM23,MarcS18,SubeljFCK19,TG21}.

A graph is a \emph{partial cube} \cite{Djokovic1973,Firsov65,Winkler84} 
if it is isomorphic to an isometric subgraph of some hypercube $Q_d$. 
Partial cubes enjoy a rich structure and appear in several 
different combinatorial settings; see for example~\cite{BC08,HammackIK,Ovchinnikov2011}. 
Eppstein~\cite{Eppstein11} provided an algorithm to test in $O(n^2)$ time 
whether a given graph with $n$ vertices is a partial cube.
Note that this setting is different to ours because in that setting
the input is not given as a subgraph of $Q_d$, but have to test whether 
the given graph is isomorphic to an isometric subgraph of $Q_d$. 
In our setting, in contrast, we are given the host graph $G$ and 
the subgraph $H$ so that we can immediately identify each vertex or edge
of $H$ in $G$. 
The concept of isometric embeddings in the hypercube or other products of graphs
has also been studied recently for weighted graphs; see~\cite{BerleantSCWB23,SheridanBBCW23}.

\subparagraph{Our contribution.}
We first provide a series of characterizations of isometric (\cref{sec:isometric_criteria})
and convex subgraphs (\cref{sec:convex_criteria}).
Most notably, one of the characterizations uses a marked version of the 
of the so-called Wiener index, which we will define below.
We think that the characterizations are new and of independent interest.
Although proving each of our characterizations is not difficult, identifying them
is closely related to the development of our algorithms.

Our first result regarding algorithms is the following conditional lower bound: 
if $G$ has $n$ vertices and $\Theta(n)$	edges, then there is no algorithm to test
whether a given subgraph $H$ of $G$ is isometric
or convex in $O(n^{2-\eps})$ time, for any constant $\eps>0$,
unless the Orthogonal Vectors Conjecture (OVC) and the Strong
Exponential Time Hypothesis (SETH) fail.
This means that for general graphs we cannot expect to test whether a subgraph is
convex or isometric in, say, $O(nm^{0.999})$ time, unless some standard
assumptions are wrong.
The lower bound holds for graphs with unit edge lengths and implies that
the algorithms mentioned above, with running times $\tilde O(nm)$, 
are asymptotically optimal up to polynomial factors $\Theta(n^\eps)$ for any $\eps>0$.
Our reduction is via the diameter problem. 
We review OVC and SETH in \cref{sec:lower}, where we also 
provide our reduction via the diameter problem.

The characterization using the marked version of the Wiener index can be 
combined with adaptations of known algorithms to obtain the following results
about testing whether a given subgraph is isometric or convex:
\begin{itemize}
	\item If the host graph is planar, 
		the test can be performed in $\tilde O(n^{5/3})$ time.
	\item If the host graph has treewidth bounded by a constant $k\ge 3$, 
		the test can be performed in $O(n \log^{k-1} n)$ time; the constant
		hidden in the $O$-notation depends on $k$.
		When the treewidth $k$ is considered a parameter, 
		the test can be performed in $n^{1+\eps} 2^{O(k)}$, for any constant $\eps>0$;
		the constant hidden in the $O$-notation depends on the choice of $\eps$.
	\item If the host graph is unweighted and has no $K_h$-minor, for a fixed $h$,
		then we can test whether a subgraph is isometric in subquadratic time.
		Note that this result does not hold for testing convexity nor for
		the case of weighted graphs.
\end{itemize}

Finally, in \cref{sec:plane} we consider the following setting.
The host graph $G$ is a \DEF{plane graph}, that is, a planar graph with a fixed embedding
in the plane. Let $C$ be a cycle in $G$ and let $H=\interior(G,C)$ be
the subgraph of $G$ bounded by $C$ and including $C$; see~\cref{fig:example} left.
We show how to decide whether $H$ is an isometric or a convex subgraph of $G$ in $O(n\log n)$ time.
The result can be extended to subgraphs of $G$ defined by a constant number of cycles;
see the right of~\cref{fig:example} for an example and \cref{sec:plane} for more precise
definitions.
Compared to the other results presented in the paper, this result is interesting
because it uses different characterizations and a different algorithmic paradigm.
We use multiple-source shortest-path (MSSP) algorithms for planar 
graphs~\cite{CabelloCE13,Klein05} and combine the distances in the interior 
and the exterior of $C$, as the source of the shortest-path tree moves
along the cycle $C$. As far as we know, this is the first application of MSSP where 
we maintain a combination of the interior and the exterior distances.

\begin{figure}
\centering
	\includegraphics[page=2,width=.9\textwidth]{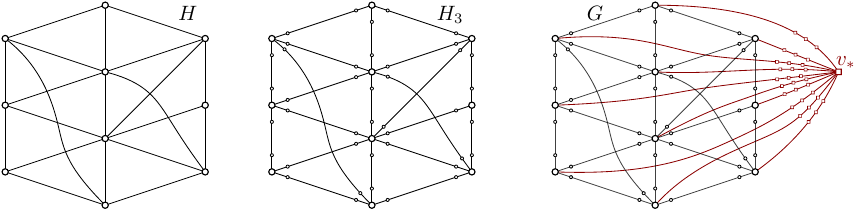}
	\caption{Schematic view of the subgraph $H=\interior(G,C)$ defined 
		by the cycle $C$ (left) and the subgraph $\Holes(G,C,\{C_1,C_2\})$  
		defined by a few cycles (right). They are both marked with stripes.}
	\label{fig:example}
\end{figure}

\subparagraph{Notation and basic concepts.}
Through our discussion we will assume that the host graph $G$ is fixed 
and in the notation we will often drop the dependency on $G$.

For each subgraph $H$ of $G$, the \DEF{boundary} of $H$ (in $G$)
is the set $\partial H$  of vertices 
of $H$ that are incident to some edge outside $H$.
Thus, $\partial H = \{ v\in V(H)\mid \exists uv\in E(G)\setminus E(H)\}$;
see \cref{fig:boundary} for a schema.
Note that each path in $G$ that has edges from $E(H)$ and edges from 
$E(G)\setminus E(H)$
must have some vertex of $\partial H$.

\begin{figure}
	\centering
	\includegraphics[page=3,scale=.85]{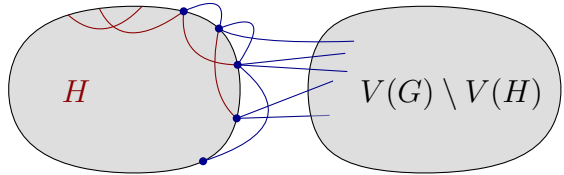}
	\caption{Boundary vertices of $H$ are denoted with blue dots. Note that an edge
		connecting vertices of $H$ may belong to $E(H)$ or not; 
		in the latter case the vertices belong to $\partial H$.}
	\label{fig:boundary}
\end{figure}

For each subgraph $H$ of $G$, let $\Hout$ be the \DEF{complement} of $H$ in $G$,
defined as the subgraph with edges not contained in $H$.
Thus, $\Hout=(V(G),E(G)\setminus E(H))$. The graph $\Hout$ may have isolated vertices.
For distances in $\Hout$ we write $d_{\Hout}(\cdot,\cdot)$.

We will often use without explicit mention that for any subgraph $H$ of $G$ 
and any vertices $x,y\in V(H)$ we have $d_G(x,y)\le d_H(x,y)$
and $d_G(x,y)\le d_{\Hout}(x,y)$.

The \DEF{Wiener index with marked vertices} for a graph $G$ with respect
to $U\subseteq V(G)$ is $\sum_{x,y\in U} d_G(x,y)$.
The classical Wiener index is the case of $U=V(G)$.
For unweighted graphs, this is a very similar to the vertex-weighted 
medians considered by Bandelt and Chepoi~\cite{BandeltC02} 
and it is a particular case of the vertex-weighted Wiener index 
considered by B{\'{e}}n{\'{e}}teau et al.~\cite{BeneteauCCV22} for median graphs.
In their vertex-weighted version, the marked vertices have weight $1$
and the vertices to be ignored in the sum have weight $0$.

We will use that several of the existing algorithms for computing distances
and the Wiener index in graphs work in the so-called \DEF{comparison-addition model}, 
which means that they are manipulating and comparing sums of the edge lengths,
but they do not make other operations with the edge lengths.
Precise references to the algorithms will be given below when considering 
special classes of graphs.

\section{Characterizing isometric subgraphs}
\label{sec:isometric_criteria}

In this section we provide different criteria to test whether a subgraph $H$ of $G$ 
is isometric.

\begin{lemma}
\label{le:isometric-boundary}
	Let $G$ be a connected graph and let $H$ be a subgraph of $G$.
	Then $H$ is an isometric subgraph of $G$ if and only if
	\[
		\forall x,y\in \partial H: ~~~ d_{H}(x,y) \le d_{\Hout}(x,y).
	\]
\end{lemma}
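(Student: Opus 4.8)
The plan is to use the standing fact, noted just above, that $d_G(x,y)\le d_H(x,y)$ for all $x,y\in V(H)$; hence $H$ is isometric if and only if the reverse inequality $d_H(x,y)\le d_G(x,y)$ holds for every pair $x,y\in V(H)$.

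The forward direction is immediate: if $H$ is isometric and $x,y\in\partial H\subseteq V(H)$, then since every path of $\Hout$ is also a path of $G$ we have $d_G(x,y)\le d_{\Hout}(x,y)$, and combining this with $d_H(x,y)=d_G(x,y)$ gives $d_H(x,y)\le d_{\Hout}(x,y)$. (This half in fact holds for all $x,y\in V(H)$.)

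For the converse I would assume the boundary condition, fix arbitrary $x,y\in V(H)$, take a shortest path $\pi$ from $x$ to $y$ in $G$, and show $d_H(x,y)\le \ell(\pi)=d_G(x,y)$. The key is to decompose $\pi$ into maximal subpaths each of which uses only edges of $E(H)$ or only edges of $E(G)\setminus E(H)$; that is, each piece lies entirely in $H$ or entirely in $\Hout$. A maximal piece $\sigma$ of the second kind is a path in $\Hout$, and I claim both its endpoints lie in $\partial H$: an endpoint $a$ of $\sigma$ is either $x$ or $y$, hence in $V(H)$, or it is an interior vertex of $\pi$ incident, by maximality of $\sigma$, to a $\pi$-edge of $E(H)$, hence again in $V(H)$; in either case $a$ is also incident to the first or last edge of $\sigma$, which lies in $E(G)\setminus E(H)$, so $a\in\partial H$. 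Writing $a,b$ for the endpoints of such a $\sigma$, the hypothesis gives $\ell(\sigma)\ge d_{\Hout}(a,b)\ge d_H(a,b)$, so $\sigma$ may be replaced by a shortest $a$-to-$b$ path in $H$ without increasing the length. The maximal pieces of the first kind already lie in $H$, since each of their edges is in $E(H)$ and therefore each of their vertices is in $V(H)$. Concatenating the modified and unmodified pieces yields a walk from $x$ to $y$ inside $H$ of length at most $\ell(\pi)$, so $d_H(x,y)\le \ell(\pi)=d_G(x,y)$, and equality follows.

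I expect the one delicate point to be the claim that the endpoints of the ``outside'' pieces are boundary vertices — this is exactly where the restriction to pairs in $\partial H$ is exploited, and it rests on the two small observations that a vertex incident to an edge of $E(H)$ belongs to $V(H)$ and that $x,y\in V(H)$. The rest is routine; one only has to keep in mind that the concatenated object is a walk rather than a path, which is harmless since the length of a walk is an upper bound on the corresponding distance.
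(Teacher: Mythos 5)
Your proposal is correct and follows essentially the same route as the paper: the forward direction via $d_G\le d_{\Hout}$, and the converse by splitting a shortest $G$-path into maximal pieces lying in $H$ or $\Hout$, replacing each $\Hout$-piece (whose endpoints lie in $\partial H$) by a no-longer $H$-path, and concatenating into a walk in $H$. Your explicit justification that the endpoints of the outside pieces belong to $\partial H$ is a point the paper states without elaboration, but the argument is the same.
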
	
\begin{proof}
	One direction is obvious: if $H$ is an isometric subgraph of $G$,
	then for $\forall x,y\in V(H)$ we have $d_{H}(x,y) = d_G(x,y)\le d_{\Hout}(x,y)$.
	
	To show the other direction, assume that for all distinct 
	$x,y\in \partial H$ we have $d_{H}(x,y) \le d_{\Hout}(x,y)$. 
	Consider any two vertices $x,y$ in $H$ and a shortest path $\pi$ in $G$ between $x$ and $y$.
	If $\pi$ is contained in $H$, then $d_G(x,y)\le d_H(x,y)\le \ell(\pi) =d_G(x,y)$
	and therefore $d_G(x,y)=d_H(x,y)$.
	If $\pi$ is not contained in $H$, then $\pi$ contains some subpath in $\Hout$.
	We split $\pi$ into maximal connected subpaths $\pi_1,\dots,\pi_k$ of $\pi$ such
	that each of them is contained either in $H$ or in $\Hout$.
	For each $\pi_i$ contained in $\Hout$, its endpoints $x_i,y_i$ are in $\partial H$.
	For each index $i\in [k]$ we define
	\[
		\pi'_i= \begin{cases}
				\pi_i &\text{if $\pi_i$ contained in $H$},\\
				\text{shortest path from $x_i$ to $y_i$ in $H$},  &\text{if $\pi_i$ contained in $\Hout$}.
			\end{cases}
	\]
	Note that in the second case such a path $\pi'_i$ in $H$ always exists because 
	by assumption $d_H(x_i,y_i)\le d_{\Hout}(x_i,y_i) = \ell(\pi_i)<+\infty$.
	
	Because of our assumption we have
	\begin{align*}
		\text{if $\pi_i$ inside $H$:}&~~~ \ell(\pi'_i) = \ell(\pi_i);\\
		\text{if $\pi_i$ inside $\Hout$:}&~~~ \ell(\pi'_i) = d_H(x_i,y_i) \le d_{\Hout} (x_i,y_i) = \ell(\pi_i).
	\end{align*}
	Thus $\ell(\pi'_i) \le \ell(\pi_i)$ for each index $i\in [k]$.
	Let $\pi'$ be the $x$-to-$y$ walk obtained by concatenating $\pi'_1,\dots,\pi'_k$. 
	Since $\pi'$ is contained in $H$ because each $\pi'_i$ is contained in $H$,
	we conclude that
	\[
		d_G(x,y) = \ell(\pi) = \sum_{i=1}^k \ell(\pi_i) \ge \sum_{i=1}^k \ell(\pi'_i) = \ell(\pi') 
		\ge d_H(x,y) \ge d_G(x,y),
	\]
	and therefore $d_G(x,y)=d_H(x,y)$.
\end{proof}

The next two characterizations use the Wiener index with marked vertices.

\begin{lemma}
\label{le:isometric-sum}
	Let $G$ be a connected graph and let $H$ be a subgraph of $G$.
	Then $H$ is an isometric subgraph of $G$ if and only if
	\[
		\sum_{x,y\in V(H)} d_G(x,y) = \sum_{x,y\in V(H)} d_H(x,y).
	\]
\end{lemma}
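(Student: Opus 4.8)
The plan is to derive this from \cref{le:isometric-boundary}, or more directly from the defining inequality $d_G(x,y)\le d_H(x,y)$ that holds for \emph{every} pair $x,y\in V(H)$. The key observation is that the two sums being compared are sums of the \emph{same number} of terms, indexed by the same set of pairs, and that each individual term satisfies $d_G(x,y)\le d_H(x,y)$. Hence the sum over all pairs satisfies $\sum_{x,y\in V(H)} d_G(x,y)\le \sum_{x,y\in V(H)} d_H(x,y)$, with equality if and only if $d_G(x,y)=d_H(x,y)$ for every pair $x,y\in V(H)$ — which is exactly the definition of $H$ being isometric in $G$.

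Concretely, I would proceed as follows. First, recall (as stated in the excerpt's preliminaries) that for any subgraph $H$ and any $x,y\in V(H)$ we have $d_G(x,y)\le d_H(x,y)$; in particular every term on the left-hand side is at most the corresponding term on the right-hand side. Second, for the ``only if'' direction: if $H$ is isometric, then $d_G(x,y)=d_H(x,y)$ termwise, so the two sums are literally equal. Third, for the ``if'' direction: suppose the sums are equal. Writing the difference as $\sum_{x,y\in V(H)}\bigl(d_H(x,y)-d_G(x,y)\bigr)=0$, and noting that this is a sum of nonnegative terms that equals zero, we conclude each term vanishes, i.e. $d_H(x,y)=d_G(x,y)$ for all $x,y\in V(H)$, which is precisely the definition of isometric.

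There is essentially no obstacle here — this is the standard ``sum of nonnegative reals is zero iff each is zero'' argument, and the only thing to be slightly careful about is that the index set on both sides is identical (ordered or unordered pairs, with or without the diagonal $x=y$, it does not matter as long as both sums use the same convention, since the diagonal contributes $0$ to both). If one prefers, one can phrase the ``if'' direction via contrapositive: if $H$ is not isometric, there exist $x_0,y_0\in V(H)$ with $d_G(x_0,y_0)<d_H(x_0,y_0)$, and since all other terms satisfy $d_G\le d_H$, the left sum is strictly smaller than the right sum. Either phrasing is a two-line argument; the content of the lemma is really just the packaging of the pairwise distance comparison into a single scalar identity, which is what makes it useful algorithmically (one comparison of two numbers instead of $\Theta(|V(H)|^2)$ comparisons).
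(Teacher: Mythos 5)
Your proposal is correct and follows essentially the same route as the paper: both use the termwise inequality $d_G(x,y)\le d_H(x,y)$, with the ``if'' direction argued via the contrapositive (a single strictly smaller term forces strict inequality of the sums), which is exactly the paper's argument. Your alternative phrasing via ``a sum of nonnegative terms is zero iff each term is zero'' is a harmless repackaging of the same idea.
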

\begin{proof}
	One direction is obvious: if $H$ is an isometric subgraph of $G$, 
	then $d_H(x,y)=d_G(x,y)$ for all $x,y\in V(H)$ and both sums 
	in the lemma have the same value.
	
	To show the other direction, note first that, since $H$ is a 
	subgraph of $G$, for all $x,y\in V(H)$ we have $d_G(x,y)\le d_H(x,y)$.
	If $H$ is not an isometric subgraph of $G$, then
	there exist distinct $x_0,y_0\in V(H)$ such that $d_H(x_0,y_0)\neq d_G(x_0,y_0)$,
	which means that $d_G(x_0,y_0)< d_H(x_0,y_0)$.
	Therefore
	\begin{align*}
		\sum_{x,y\in V(H)} d_G(x,y)~&=~ 
				2\cdot d_G(x_0,y_0) + \sum_{\begin{minipage}{2.2cm}\centering \scriptsize $x,y\in V(H)$\\ $\{x,y\}\neq \{x_0,y_0\}$\end{minipage}} d_G(x,y)\\
				~&<~
			2\cdot d_H(x_0,y_0) + \sum_{\begin{minipage}{2.2cm}\centering \scriptsize $x,y\in V(H)$\\ $\{x,y\}\neq \{x_0,y_0\}$\end{minipage}} d_H(x,y)
				~=~ \sum_{x,y\in V(H)} d_H(x,y). \qedhere
	\end{align*}
\end{proof}

In fact, it suffices to add the distances between vertices on the boundary
of $H$, or any superset of them.

\begin{lemma}
\label{le:isometric-sum-boundary}
	Let $G$ be a connected graph, let $H$ be a subgraph of $G$ and let $U$
	be such that $\partial H \subseteq U\subseteq V(H)$.
	Then $H$ is an isometric subgraph of $G$ if and only if
	\[
		\sum_{x,y\in U} d_H(x,y) = \sum_{x,y\in U} d_G(x,y).
	\]
\end{lemma}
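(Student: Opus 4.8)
The plan is to reduce to the two previous lemmas rather than re-run the argument from scratch. One direction is immediate: if $H$ is isometric, then $d_H(x,y)=d_G(x,y)$ for every pair of vertices of $H$, and in particular for every pair in $U$, so the two sums coincide.

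For the converse, I would exploit the termwise inequality $d_G(x,y)\le d_H(x,y)$, which holds for all $x,y\in V(H)$ and hence for all $x,y\in U$. The hypothesis says that $\sum_{x,y\in U}\bigl(d_H(x,y)-d_G(x,y)\bigr)=0$, a sum of nonnegative terms, so every term vanishes; that is, $d_G(x,y)=d_H(x,y)$ for all $x,y\in U$. Since $\partial H\subseteq U$, this gives in particular $d_G(x,y)=d_H(x,y)$ for all $x,y\in\partial H$.

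Finally, I would invoke the basic fact recorded in the preliminaries that $d_G(x,y)\le d_{\Hout}(x,y)$ for all $x,y\in V(H)$. Combining it with the previous step yields $d_H(x,y)=d_G(x,y)\le d_{\Hout}(x,y)$ for all $x,y\in\partial H$, which is exactly the hypothesis of \cref{le:isometric-boundary}. That lemma then concludes that $H$ is an isometric subgraph of $G$.

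I do not expect a genuine obstacle here; the only points requiring a moment of care are the passage from ``a sum of nonnegative terms is zero'' to ``each term is zero'', and the observation that restricting attention from $U$ down to the possibly smaller set $\partial H$ costs nothing, since \cref{le:isometric-boundary} only needs the inequality on boundary pairs. (One could instead argue directly in the style of \cref{le:isometric-sum}, splitting off a bad pair $x_0,y_0\in U$; but routing through \cref{le:isometric-boundary} is shorter and makes the role of $\partial H$ transparent.)
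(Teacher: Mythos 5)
Your proof is correct and matches the paper's in substance: both reduce the nontrivial direction to \cref{le:isometric-boundary} together with the termwise inequality $d_G(x,y)\le d_H(x,y)$. The only difference is direction of argument --- the paper proceeds by contraposition (a non-isometric $H$ yields, via \cref{le:isometric-boundary}, a boundary pair with $d_H(x_0,y_0)>d_{\Hout}(x_0,y_0)\ge d_G(x_0,y_0)$, which forces the sums to differ), whereas you run the same comparison forward by noting that a vanishing sum of nonnegative terms forces each term to vanish; this is purely presentational.
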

\begin{proof}
	Again, one direction is obvious: if $H$ is an isometric subgraph of $G$, 
	then $d_H(x,y)=d_G(x,y)$ for all $x,y\in V(H)$, and since $U\subseteq V(H)$,
	both sums in the lemma are equal.
	
	To show the other direction, we note first that
	for all $x,y\in U$ we have $d_G(x,y)\le d_H(x,y)$.
	If $H$ is not an isometric subgraph of $G$, then by \cref{le:isometric-boundary}
	there exist $x_0,y_0\in \partial H\subseteq U$ such that 
	$d_H(x_0,y_0)> d_{\Hout}(x_0,y_0) \ge d_G(x_0,y_0)$.
	Then a sum like the one in the proof of \cref{le:isometric-sum} implies
	\[
		\sum_{x,y\in U} d_G(x,y) ~<~ \sum_{x,y\in U} d_H(x,y). \qedhere
	\]
\end{proof}

\section{Characterizing convex subgraphs}
\label{sec:convex_criteria}

Here we provide criteria to identify when a subgraph $H$ of $G$ 
is convex. The structure of the statements and the proofs is parallel
to the isometric case considered in \cref{sec:isometric_criteria}.

\begin{lemma}
\label{le:convex-boundary}
	Let $G$ be a connected graph and let $H$ be a subgraph of $G$.
	Then $H$ is a convex subgraph of $G$ if and only if
	\[
		\forall x,y\in \partial H, ~x\neq y: ~~~ d_{H}(x,y) < d_{\Hout}(x,y).
	\]
\end{lemma}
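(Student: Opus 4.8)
The plan is to mirror the structure of the proof of \cref{le:isometric-boundary}, replacing the inequality $\le$ by the strict inequality $<$ and replacing "$H$ is isometric" by "$H$ is convex". For the easy direction, suppose $H$ is convex and take distinct $x,y \in \partial H$. Any edge $uv$ on a shortest $x$-to-$y$ path in $\Hout$ that is actually shortest in $G$ would, by convexity, have to lie in $H$; more carefully, I would argue that if $d_{\Hout}(x,y) \le d_H(x,y)$ then a shortest $x$-to-$y$ path in $\Hout$ (which has length $d_{\Hout}(x,y) = d_G(x,y)$, since it cannot be shorter than $d_G$ and equals $d_G$ here) lies in $I_G(x,y) \subseteq H$; but a path in $\Hout$ uses only edges of $E(G) \setminus E(H)$, and since $x \neq y$ it uses at least one such edge, contradicting containment in $H$. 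Care is needed when $d_G(x,y) = 0$ is impossible here because lengths are positive and $x\neq y$, so any $x$-$y$ walk has positive length; this rules out the degenerate case. Hence $d_H(x,y) < d_{\Hout}(x,y)$.

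For the converse, assume $d_H(x,y) < d_{\Hout}(x,y)$ for all distinct $x,y \in \partial H$, and suppose for contradiction that $H$ is not convex: there are $x,y \in V(H)$ and a shortest $x$-to-$y$ path $\pi$ in $G$ with an edge or vertex outside $H$, i.e.\ $\pi \not\subseteq H$. As in \cref{le:isometric-boundary}, split $\pi$ into maximal subpaths $\pi_1,\dots,\pi_k$ each lying entirely in $H$ or entirely in $\Hout$; at least one $\pi_i$ lies in $\Hout$, and for each such $\pi_i$ its endpoints $x_i,y_i$ lie in $\partial H$. Since $\pi$ is a shortest path and has an edge outside $H$, that offending subpath $\pi_i \subseteq \Hout$ has positive length (positive edge lengths, and $x_i \neq y_i$ because $\pi_i$ contains an edge), so $\ell(\pi_i) = d_{\Hout}(x_i,y_i) > d_H(x_i,y_i) \ge \ell(\pi'_i)$ strictly, where $\pi'_i$ is a shortest $x_i$-to-$y_i$ path in $H$; for the $\pi_i$ inside $H$ we keep $\pi'_i = \pi_i$. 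Concatenating the $\pi'_i$ gives an $x$-to-$y$ walk $\pi'$ contained in $H$ with $\ell(\pi') < \ell(\pi) = d_G(x,y)$, contradicting $d_G(x,y) \le d_H(x,y) \le \ell(\pi')$.

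The main subtlety, and the place I expect to spend the most care, is the easy direction: one must correctly translate "non-convex" versus the failure of the strict inequality. The cleanest route is to observe that for $x \neq y$ in $\partial H$, the interval $I_G(x,y)$ contains \emph{every} shortest $x$-to-$y$ path, and a shortest $x$-to-$y$ path in $\Hout$ has length exactly $d_{\Hout}(x,y)$; if $d_{\Hout}(x,y) \le d_G(x,y)$ then $d_{\Hout}(x,y) = d_G(x,y)$ and this path lies in $I_G(x,y)$. If $I_G(x,y) \subseteq H$, this path lies in $H$, yet it uses at least one edge of $\Hout$ (as $x \neq y$ forces a nonempty path, hence an edge, and all its edges avoid $E(H)$) — contradiction. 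So convexity forces $d_{\Hout}(x,y) > d_G(x,y)$. Combined with $d_G(x,y) = d_H(x,y)$ (which holds because convex subgraphs are isometric, or directly since $I_G(x,y) \subseteq H$ means some $G$-shortest $x$-$y$ path lies in $H$), we get $d_H(x,y) = d_G(x,y) < d_{\Hout}(x,y)$, as desired. One final remark: unlike the isometric case, the strictness is essential and cannot be relaxed — if $d_H(x,y) = d_{\Hout}(x,y)$ for some boundary pair, $H$ is still isometric but an $\Hout$-shortest path provides a geodesic leaving $H$, so $H$ is not convex.
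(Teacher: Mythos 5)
Your proposal is correct and follows essentially the same route as the paper: the forward direction observes that convexity forces every shortest $x$-to-$y$ path into $H$, so any $x$-to-$y$ path in $\Hout$ (which must use an edge of $E(G)\setminus E(H)$ since $x\neq y$) is strictly longer than $d_G(x,y)=d_H(x,y)$; the converse splits a shortest path leaving $H$ into maximal subpaths and replaces each $\Hout$-piece by a strictly shorter $H$-path, yielding the same contradiction as in the paper. Your write-up is somewhat more explicit than the paper's one-sentence forward direction (invoking $I_G(x,y)$ directly), but the argument is the same.
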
	
\begin{proof}
	If $H$ is a convex subgraph of $G$,	then, for all distinct $x,y\in \partial H$,
	all the shortest paths connecting
	$x$ to $y$ are contained in $H$, and thus each path connecting $x$ to $y$ 
	that is contained in the complement of $H$, if any exists,
	must be strictly longer.
	
	To show the other direction, assume that for all distinct 
	$x,y\in \partial H$ we have $d_{H}(x,y) < d_{\Hout}(x,y)$. 
	Assume, for the sake of reaching a contradiction, that $H$ is not a convex
	subgraph of $G$. Thus there exist two vertices $x,y$ in $H$ and a shortest
	path $\pi$ in $G$ between $x$ and $y$ that is {\em not} contained in $H$.
	Since $\pi$ is not contained in $H$, it contains some subpaths in $\Hout$.
	We split $\pi$ into maximal connected subpaths $\pi_1,\dots,\pi_k$ of $\pi$ such
	that each of them is contained either in $H$ or in $\Hout$. 
	For each $\pi_i$ contained in $\Hout$, its endpoints $x_i,y_i$ are in $\partial H$.
	For each index $i\in [k]$ we define
	\[
		\pi'_i= \begin{cases}
				\pi_i, &\text{if $\pi_i$ contained in $H$},\\
				\text{shortest path from $x_i$ to $y_i$ in $H$},  &\text{if $\pi_i$ contained in $\Hout$}.
			\end{cases}
	\]
	Note that, in the second case, such a path $\pi'_i$ in $H$ always exists because 
	by assumption $d_H(x_i,y_i)< d_{\Hout}(x_i,y_i)\le \ell(\pi_i)<+\infty$.

	The hypothesis implies that, for each index $i\in [k]$,
	\begin{align*}
		\text{if $\pi_i$ inside $H$:}&~~~ \ell(\pi'_i) = \ell(\pi_i);\\
		\text{if $\pi_i$ inside $\Hout$:}&~~~ \ell(\pi'_i) = d_H(x_i,y_i) < d_{\Hout} (x_i,y_i) = \ell(\pi_i).
	\end{align*}
	Let $\pi'$ be the $x$-to-$y$ walk obtained by concatenating $\pi'_1,\dots,\pi'_k$. 
	Because for some index $i\in [k]$ we have $\ell(\pi'_i) < \ell(\pi_i)$ and for all indices $i\in [k]$
	we have $\ell(\pi'_i) \le \ell(\pi_i)$, we conclude that
	\[
		d_G(x,y) = \ell(\pi) = \sum_{i=1}^k \ell(\pi_i) > \sum_{i=1}^k \ell(\pi'_i) = \ell(\pi') \ge d_G(x,y),
	\]
	which implies the contradiction $d_G(x,y)>d_G(x,y)$.
\end{proof}

For the following characterizations, we are going to decrease 
slightly the length of some edges of $G$ so that no new shortest 
paths are created. For this, we define
\begin{equation*}\label{eq:eps}
	\eps = \eps(G,\ell) = \frac{ \min \{ \ell(\pi)-\ell(\pi')\mid \text{$\pi,\pi'$ paths in $G$ with $\ell(\pi)>\ell(\pi')$}\}}{n}.
\end{equation*}
First note that we can indeed take the minimum because there is a finite number
of paths in $G$. Moreover, for each edge $e\in E(G)$, we have $\eps\le \ell(e)/n$
because $e$ is a path in $G$, as well as the empty path.
For a subgraph $H$ of $G$ and a $\delta$ with $0<\delta\le \eps$,
we define the edge lengths $\hat\ell$
by shortening the edges {\em not} in $H$ by $\delta$. Thus
\[
	\hat\ell(e)=\begin{cases}
			\ell(e), &\text{if $e\in E(H)$},\\
			\ell(e)-\delta, &\text{if $e\in E(G)\setminus E(H)$}.	
			\end{cases}
\]
Note that $\hat\ell(e)>0$ for all $e\in E(G)$ because $\delta\le \eps < \ell(e)$.
The lengths $\hat\ell$ depend on $H$ and $\delta$, but we drop this dependency 
to avoid cluttering the notation.

\begin{lemma}
\label{le:perturbation}
	If $G$ is connected and $0< \delta \le \eps$, each shortest path in $G$ 
	with respect to $\hat\ell$
	is a shortest path in $G$ with respect to $\ell$. 
	\textup(The converse is not necessarily true.\textup)
\end{lemma}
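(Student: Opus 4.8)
The plan is to prove the contrapositive-style statement directly: take a path $\pi$ that is shortest with respect to $\hat\ell$ and show it must also be shortest with respect to $\ell$. Suppose not. Then there is some other $x$-to-$y$ path $\sigma$ with $\ell(\sigma) < \ell(\pi)$, where $x,y$ are the endpoints of $\pi$. The key is to compare how much $\hat\ell$ can differ from $\ell$ on any path. Since $\hat\ell$ is obtained from $\ell$ by subtracting $\delta$ from each edge not in $H$, and a path has at most $n-1$ edges, we have for every path $\rho$ in $G$ the bounds $\ell(\rho) - (n-1)\delta \le \hat\ell(\rho) \le \ell(\rho)$; in particular $\ell(\rho) - \hat\ell(\rho) \in [0, (n-1)\delta] \subseteq [0, n\delta]$.

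The heart of the argument is then a short chain of inequalities. From $\ell(\pi) > \ell(\sigma)$ and the definition of $\eps$, we get $\ell(\pi) - \ell(\sigma) \ge n\eps \ge n\delta$ (here I use $0 < \delta \le \eps$, and that $\ell(\pi)-\ell(\sigma)$ is one of the positive gaps appearing in the minimum defining $\eps$, hence is at least $n\eps$). Now estimate:
\[
	\hat\ell(\pi) \;\ge\; \ell(\pi) - n\delta \;\ge\; \ell(\pi) - \big(\ell(\pi)-\ell(\sigma)\big) \;=\; \ell(\sigma) \;\ge\; \hat\ell(\sigma),
\]
where the first inequality uses that $\pi$ has fewer than $n$ edges, the second uses $n\delta \le \ell(\pi)-\ell(\sigma)$, and the last uses $\hat\ell(\sigma) \le \ell(\sigma)$. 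This contradicts the assumption that $\pi$ is strictly shorter than $\sigma$ under $\hat\ell$ — or, if one is careful about the non-strict case, it shows $\hat\ell(\pi) \ge \hat\ell(\sigma)$, so $\pi$ was not a unique shortest path and we can replace it by $\sigma$; iterating (or picking $\sigma$ to be an $\ell$-shortest path to begin with) finishes the argument. The cleanest phrasing: let $\sigma$ be an $\ell$-shortest $x$-to-$y$ path; if $\pi$ is not $\ell$-shortest then $\ell(\pi) > \ell(\sigma)$, and the displayed chain gives $\hat\ell(\pi) \ge \hat\ell(\sigma)$, contradicting that $\pi$ is $\hat\ell$-shortest (strictly shorter than any other path, or at least tying, in which case we already have an $\ell$-shortest path realizing the $\hat\ell$-distance).

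The main obstacle — really the only subtle point — is bookkeeping the strictness: a "shortest path" need not be unique, so "$\pi$ is $\hat\ell$-shortest" only yields $\hat\ell(\pi) \le \hat\ell(\sigma)$, matching what the chain delivers, so one must phrase the conclusion as "$\pi$ realizes $d_G^{\hat\ell}(x,y)$ implies $\ell(\pi) = d_G^{\ell}(x,y)$" and note the displayed inequalities force $\hat\ell(\pi) = \hat\ell(\sigma)$ and hence (reading the chain backwards) $\ell(\pi) = \ell(\sigma)$. Everything else — the per-edge perturbation bound, the $n\eps$ gap — is immediate from the definition of $\eps$ and the fact that simple paths have at most $n-1 < n$ edges. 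The parenthetical remark that the converse fails needs no proof; a one-line example (two parallel routes of equal $\ell$-length, one inside $H$ and one outside, so only the outside one is $\hat\ell$-shortest) could be mentioned but is not required.
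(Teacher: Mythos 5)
Your proposal is correct in substance and follows essentially the same route as the paper: both rest on the two facts that $0 \le \ell(\rho)-\hat\ell(\rho) \le (n-1)\delta$ for every path $\rho$ and that any positive gap between $\ell$-lengths of paths is at least $n\eps \ge n\delta$; the paper just arranges the chain directly ($\ell(\pi) \le \hat\ell(\pi') + (n-1)\delta < \ell(\pi') + n\eps$, hence $\ell(\pi)\le\ell(\pi')$) rather than by contradiction. The one slip is in your handling of the tie case: equality throughout your displayed chain does \emph{not} give $\ell(\pi)=\ell(\sigma)$ (it would give $\ell(\pi)-\ell(\sigma)=n\delta>0$); the correct way to close is to observe that the first inequality $\hat\ell(\pi)\ge\ell(\pi)-n\delta$ is in fact strict, since $\hat\ell(\pi)=\ell(\pi)-|E(\pi)\setminus E(H)|\,\delta$ and a path has at most $n-1<n$ edges while $\delta>0$, so the chain yields $\hat\ell(\pi)>\hat\ell(\sigma)$ outright, contradicting $\hat\ell$-optimality of $\pi$ with no case analysis needed. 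This is precisely the role of the strict inequality $(n-1)\delta<n\eps$ in the paper's version.
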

\begin{proof}
	Let $\pi$ be a shortest $x$-to-$y$ path in $G$ with respect to $\hat\ell$.
	This means that, for any other $x$-to-$y$ path $\pi'$ in $G$
	we have $\hat\ell(\pi)\le \hat\ell(\pi')$.
	We have 
	\begin{align*}
		\ell(\pi) ~&=~ \sum_{e\in E(\pi)} \ell(e) 
				~=~  \sum_{e\in E(\pi)} \hat\ell (e) + |E(\pi) \setminus E(H)|\cdot \delta
				~=~ \hat\ell(\pi) + |E(\pi) \setminus E(H)|\cdot \delta\\
				~&\le~ \hat\ell(\pi') + (n-1) \delta ~<~ \ell(\pi') + n \eps,
	\end{align*}
	where in the last inequality we used that the $\hat\ell$-length is
	never longer than the $\ell$-length.
	Therefore $\ell(\pi) < \ell(\pi') + n\eps$.
	From the definition of $\eps$, we conclude that $\ell(\pi)\le \ell(\pi')$. 
	Since this holds for each $x$-to-$y$ path $\pi'$, the path $\pi$
	is a shortest $x$-to-$y$ path with respect to $\ell$.	
\end{proof}

Let $\hat d_G(\cdot,\cdot)$ denote the distance in $G$ with respect to 
the edge lengths $\hat\ell$. 

\begin{lemma}
\label{le:convex-sum}
	Let $G$ be a connected graph, let $H$ be a subgraph of $G$, 
	let $\delta$ be a real value such that $0< \delta\le \eps$,
	and let $\hat d_G$ be the corresponding perturbed distance.
	Then $H$ is a convex subgraph of $G$ if and only if
	\[
		\sum_{x,y\in V(H)} \hat d_G(x,y) = \sum_{x,y\in V(H)} d_H(x,y).
	\]
\end{lemma}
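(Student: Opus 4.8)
The plan is to mirror the structure of the proofs of \cref{le:isometric-sum} and \cref{le:isometric-sum-boundary}, using the perturbation machinery of \cref{le:perturbation} to reduce the strict inequality in \cref{le:convex-boundary} to an equality of sums. The forward direction should be routine: if $H$ is convex, then for all $x,y\in V(H)$ every $\ell$-shortest path from $x$ to $y$ lies inside $H$, so in particular $d_G(x,y)=d_H(x,y)$. I would then argue that $\hat d_G(x,y)=d_H(x,y)$ as well: by \cref{le:perturbation} any $\hat\ell$-shortest $x$-to-$y$ path is also an $\ell$-shortest path, hence lies in $H$, hence uses only edges of $E(H)$ on which $\hat\ell=\ell$; so its $\hat\ell$-length equals its $\ell$-length, which equals $d_H(x,y)$. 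Summing over all pairs $x,y\in V(H)$ gives the claimed equality.

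For the converse I would prove the contrapositive: assume $H$ is not convex and show the two sums differ. First note the easy inequality $\hat d_G(x,y)\le d_H(x,y)$ for all $x,y\in V(H)$, since any path inside $H$ is also a path in $G$ and shortening outside edges only helps. Next, since $H$ is not convex, \cref{le:convex-boundary} gives distinct $x_0,y_0\in\partial H\subseteq V(H)$ with $d_H(x_0,y_0)\ge d_{\Hout}(x_0,y_0)$; I then want to upgrade this to the strict perturbed inequality $\hat d_G(x_0,y_0)<d_H(x_0,y_0)$. The idea is that the $\Hout$-path $\sigma$ witnessing $d_{\Hout}(x_0,y_0)\le d_H(x_0,y_0)$ uses at least one edge outside $H$ (in fact all of them are outside $H$, since $\sigma\subseteq\Hout$), so $\hat\ell(\sigma)=\ell(\sigma)-|E(\sigma)|\delta\le \ell(\sigma)-\delta = d_{\Hout}(x_0,y_0)-\delta\le d_H(x_0,y_0)-\delta<d_H(x_0,y_0)$, whence $\hat d_G(x_0,y_0)\le \hat\ell(\sigma)<d_H(x_0,y_0)$. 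With a strict inequality at the pair $\{x_0,y_0\}$ and the non-strict inequality $\hat d_G(x,y)\le d_H(x,y)$ at every other pair, the same term-by-term comparison as in the proof of \cref{le:isometric-sum} yields
\[
	\sum_{x,y\in V(H)} \hat d_G(x,y) ~<~ \sum_{x,y\in V(H)} d_H(x,y),
\]
so the two sums are not equal, completing the contrapositive.

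The step I expect to require the most care is establishing $\hat d_G(x,y)=d_H(x,y)$ in the forward direction, i.e.\ that perturbation does not create a new, strictly shorter $x$-to-$y$ route through $\Hout$ when $H$ is convex. The subtlety is that an $\hat\ell$-shortest path need not be an $\ell$-shortest path (the converse of \cref{le:perturbation} fails), so a priori it could dip outside $H$; the resolution is precisely \cref{le:perturbation}, which guarantees that $\hat\ell$-shortest paths are $\ell$-shortest and therefore, by convexity, stay inside $H$ where $\hat\ell$ and $\ell$ agree. I would state this implication explicitly as the crux of the argument. A secondary minor point is making sure $x_0\neq y_0$ is genuinely used (distances between distinct vertices) so that the ``doubling'' trick in the sum comparison is valid, exactly as in \cref{le:isometric-sum}; this is handled by the $x\neq y$ clause in \cref{le:convex-boundary}. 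Everything else is bookkeeping parallel to the isometric case.
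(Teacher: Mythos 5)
Your forward direction is exactly the paper's argument: an $\hat\ell$-shortest path is $\ell$-shortest by \cref{le:perturbation}, hence contained in $H$ by convexity, hence uses only edges where $\hat\ell=\ell$, giving $\hat d_G(x,y)=d_G(x,y)=d_H(x,y)$ termwise. For the converse you take a genuinely different route. The paper argues directly from the definition of non-convexity: it picks an $\ell$-shortest path $\pi_0$ between some $x_0,y_0\in V(H)$ that is not contained in $H$; since $\pi_0$ has an edge outside $E(H)$, it satisfies $\hat\ell(\pi_0)<\ell(\pi_0)=d_G(x_0,y_0)\le d_H(x_0,y_0)$, and the term-by-term sum comparison follows with no case analysis. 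You instead invoke \cref{le:convex-boundary} to obtain distinct $x_0,y_0\in\partial H$ with $d_H(x_0,y_0)\ge d_{\Hout}(x_0,y_0)$ and then perturb a witness path in $\Hout$; this is precisely the argument the paper reserves for the boundary-restricted variant \cref{le:convex-sum-boundary}, where routing through $\partial H$ is unavoidable. Both routes work here, but the direct one is shorter and sidesteps an edge case that your version leaves open.

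That edge case is the one real gap: your witness path $\sigma$ in $\Hout$ exists only if $d_{\Hout}(x_0,y_0)<+\infty$. The negation of \cref{le:convex-boundary} only guarantees $d_H(x_0,y_0)\ge d_{\Hout}(x_0,y_0)$, and this is also satisfied when both distances are $+\infty$, i.e., when $x_0$ and $y_0$ lie in different components of $H$ (indeed $d_{\Hout}(x_0,y_0)=+\infty$ forces $d_H(x_0,y_0)=+\infty$ under your inequality). You must dispose of disconnected $H$ first: in that case $\sum_{x,y\in V(H)}d_H(x,y)=+\infty$ while $\sum_{x,y\in V(H)}\hat d_G(x,y)$ is finite because $G$ is connected, so the sums differ trivially; once $H$ is connected, $d_{\Hout}(x_0,y_0)\le d_H(x_0,y_0)<+\infty$, the path $\sigma$ exists, and your perturbation step $\hat\ell(\sigma)\le\ell(\sigma)-\delta<d_H(x_0,y_0)$ goes through (using $x_0\neq y_0$ to get $|E(\sigma)|\ge 1$, as you note). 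This is exactly the one-sentence patch the paper inserts at the start of the converse in \cref{le:convex-sum-boundary}; with it added, your proof is complete.
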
	
\begin{proof}
	Assume first that $H$ is a convex subgraph of $G$. 
	Consider any two vertices $x,y$ of $H$ and let $\pi_{x,y}$
	be a shortest path in $G$ from $x$ to $y$ with respect to $\hat\ell$.
	Therefore $\hat d_G(x,y) = \hat\ell(\pi_{x,y})$.
	Because of \cref{le:perturbation}, the path $\pi_{x,y}$ is a shortest
	path in $G$ with respect to $\ell$, which means that 
	$\ell(\pi_{x,y}) = d_G(x,y)$.
	Because $H$ is convex, the path $\pi_{x,y}$ is contained in $H$,
	and since the edges of $H$ have the same
	length in $\ell$ and in $\hat\ell$, we have
	$\hat\ell(\pi_{x,y}) = \ell(\pi_{x,y})$.
	Putting the preceding equalities together and using that $H$ is a convex 
	(and thus isometric) subgraph of $G$, we obtain
	\[
		\hat d_G(x,y) = \hat\ell(\pi_{x,y}) =
		\ell(\pi_{x,y}) = d_G(x,y) = d_H(x,y).
	\]
	Since this equality holds for each $x,y\in V(H)$, we obtain
	\[
		\sum_{x,y\in V(H)} \hat d_G(x,y) = \sum_{x,y\in V(H)} d_H(x,y).
	\]
	
	To show the other direction, note first that 
	$\hat d_G(x,y)\le d_G(x,y) \le d_H(x,y)$
	for all $x,y\in V(H)$ because $\hat\ell(e) \le \ell(e)$
	for all edges $e\in E(G)$ (for the first inequality) 
	and $H$ is a subgraph of $G$ (for the second inequality).
	Assume now that $H$ is not a convex subgraph of $G$. This means that
	there exist vertices $x_0,y_0\in V(H)$ and a $x_0$-to-$y_0$
	shortest path $\pi_0$ that is not contained in $H$. Since $\pi_0$
	has some edge in $E(G)\setminus E(H)$, we have
	$\hat\ell(\pi_0) < \ell(\pi_0)$ and therefore
	$\hat d_G(x_0,y_0)< d_G(x_0,y_0) \le d_H(x_0,y_0)$.
	Therefore
	\begin{align*}
		\sum_{x,y\in V(H)} \hat d_G(x,y) ~&=~ 
			2\cdot \hat d_G(x_0,y_0) + \sum_{\begin{minipage}{2.2cm}\centering \scriptsize $x,y\in V(H)$\\ $\{x,y\}\neq \{x_0,y_0\}$\end{minipage}} \hat d_G(x,y)\\
		~&<~ 2\cdot d_H(x_0,y_0) + \sum_{\begin{minipage}{2.2cm}\centering \scriptsize $x,y\in V(H)$\\ $\{x,y\}\neq \{x_0,y_0\}$\end{minipage}} d_H(x,y) 
		~=~ \sum_{x,y\in V(H)} d_H(x,y). \qedhere
	\end{align*}
\end{proof}

\begin{lemma}
\label{le:convex-sum-boundary}
	Let $G$ be a connected graph, let $H$ be a subgraph of $G$, 
	let $U$ be such that $\partial H \subseteq U\subseteq V(H)$,
	let $\delta$ be a real value such that $0< \delta\le \eps$,
	and let $\hat d_G$ be the corresponding perturbed distance.
	Then $H$ is a convex subgraph of $G$ if and only if
	\[
		\sum_{x,y\in U} d_H(x,y) = \sum_{x,y\in U} \hat d_G(x,y).
	\]
\end{lemma}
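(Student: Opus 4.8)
The plan is to mirror the argument of \cref{le:isometric-sum-boundary}, replacing the use of \cref{le:isometric-boundary} with its convex counterpart \cref{le:convex-boundary} and the unperturbed distance with the perturbed one $\hat d_G$. Concretely, first I would dispose of the easy direction: if $H$ is convex in $G$, then by \cref{le:convex-sum} we have $\sum_{x,y\in V(H)} \hat d_G(x,y) = \sum_{x,y\in V(H)} d_H(x,y)$; since the proof of \cref{le:convex-sum} actually establishes the termwise equality $\hat d_G(x,y)=d_H(x,y)$ for every $x,y\in V(H)$, restricting the sum to $U\subseteq V(H)$ gives $\sum_{x,y\in U} d_H(x,y) = \sum_{x,y\in U} \hat d_G(x,y)$ immediately.

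For the converse, I would argue by contrapositive. Suppose $H$ is not convex. I first record the termwise inequality $\hat d_G(x,y)\le d_G(x,y)\le d_H(x,y)$ for all $x,y\in U\subseteq V(H)$, exactly as in the proof of \cref{le:convex-sum} (first inequality because $\hat\ell\le\ell$, second because $H$ is a subgraph of $G$). Now, since $H$ is not convex, \cref{le:convex-boundary} yields distinct $x_0,y_0\in\partial H$ with $d_H(x_0,y_0)\ge d_{\Hout}(x_0,y_0)$. The key point is that this forces a \emph{strict} drop at the pair $(x_0,y_0)$ for the perturbed distance: a shortest $x_0$-to-$y_0$ path in $\Hout$ consists entirely of edges outside $E(H)$, each shortened by $\delta>0$, so its $\hat\ell$-length is at most $d_{\Hout}(x_0,y_0) - \delta < d_{\Hout}(x_0,y_0) \le d_H(x_0,y_0)$, whence $\hat d_G(x_0,y_0) < d_H(x_0,y_0)$. (Alternatively, and more in the spirit of the proof of \cref{le:convex-sum}: from \cref{le:convex-boundary} non-convexity gives a shortest path $\pi_0$ in $G$ between two vertices of $H$ that is not contained in $H$; $\pi_0$ has an edge outside $E(H)$, so $\hat\ell(\pi_0)<\ell(\pi_0)=d_G(x_0,y_0)$, and then $\hat d_G(x_0,y_0)<d_G(x_0,y_0)\le d_H(x_0,y_0)$ — but one must check the endpoints lie in $U$, which \cref{le:convex-boundary} guarantees by placing them in $\partial H\subseteq U$.) Either way, combining the strict inequality at $(x_0,y_0)$ with the non-strict inequalities at all other pairs, and splitting the sum as in the proof of \cref{le:isometric-sum}, gives
\[
	\sum_{x,y\in U} \hat d_G(x,y) ~=~ 2\cdot\hat d_G(x_0,y_0) + \!\!\sum_{\substack{x,y\in U\\ \{x,y\}\neq\{x_0,y_0\}}}\!\! \hat d_G(x,y)
	~<~ 2\cdot d_H(x_0,y_0) + \!\!\sum_{\substack{x,y\in U\\ \{x,y\}\neq\{x_0,y_0\}}}\!\! d_H(x,y) ~=~ \sum_{x,y\in U} d_H(x,y),
\]
contradicting the hypothesis. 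Hence $H$ is convex.

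The only genuinely non-routine point — and the reason the perturbation $\hat\ell$ is introduced at all — is the strict inequality $\hat d_G(x_0,y_0) < d_H(x_0,y_0)$ at the witnessing pair. In the isometric case (\cref{le:isometric-sum-boundary}) one had a clean \emph{strict} gap $d_H(x_0,y_0) > d_{\Hout}(x_0,y_0)$ directly from \cref{le:isometric-boundary}, whereas here \cref{le:convex-boundary} only tells us that some shortest $G$-path leaves $H$, which would give merely $d_G(x_0,y_0)\le d_H(x_0,y_0)$ with equality possible. Shortening off-$H$ edges by $\delta$ converts "a shortest path uses an off-$H$ edge" into a genuine strict decrease, while \cref{le:perturbation} ensures no spurious shortcuts are created; this is exactly what makes the sum argument go through. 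Everything else is the same bookkeeping as in the previous proofs, so I would keep the write-up short and point back to the proof of \cref{le:convex-sum} for the termwise facts.
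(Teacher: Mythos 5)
Your proof is correct and follows essentially the same route as the paper's: the forward direction cites the termwise equality $\hat d_G(x,y)=d_H(x,y)$ from the proof of \cref{le:convex-sum}, and the converse uses \cref{le:convex-boundary} to produce $x_0,y_0\in\partial H\subseteq U$ with $d_H(x_0,y_0)\ge d_{\Hout}(x_0,y_0)$, then turns this non-strict inequality into the strict drop $\hat d_G(x_0,y_0)<d_H(x_0,y_0)$ by shortening the all-off-$H$ edges of a shortest path in $\Hout$. This is exactly the paper's chain $\hat d_G(x_0,y_0)\le\hat\ell(\pi'_0)<\ell(\pi'_0)\le d_H(x_0,y_0)$.

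The one point you gloss over, and which the paper disposes of explicitly before invoking \cref{le:convex-boundary}, is a disconnected $H$. There the witness pair may satisfy $d_{\Hout}(x_0,y_0)=\infty$ (and hence $d_H(x_0,y_0)=\infty$), in which case ``a shortest $x_0$-to-$y_0$ path in $\Hout$'' need not exist and the bound $\hat\ell(\pi'_0)\le d_{\Hout}(x_0,y_0)-\delta$ is not meaningful. The lemma still holds trivially in that case, since $\sum_{x,y\in U}d_H(x,y)=\infty$ while $\sum_{x,y\in U}\hat d_G(x,y)$ is finite because $G$ is connected; so this is a one-sentence patch, not a flaw in the approach. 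I would also drop your parenthetical ``alternative'': the definition of non-convexity yields a shortest $G$-path leaving $H$ whose endpoints need not lie in $\partial H$, and \cref{le:convex-boundary} does not supply such a path with endpoints in $\partial H$ --- it supplies the distance inequality you already use --- so that variant does not assemble as stated. Your primary argument is the right one.
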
	
\begin{proof}
	In the first part of the proof of \cref{le:convex-sum} we have seen
	that, if $H$ is a convex subgraph of $G$, then for all $x,y\in V(H)$
	we have $\hat d_G(x,y)=d_H(x,y)$. The equality of the two sums follows.
	
	For the other direction, we start assuming that $H$ is not a convex
	subgraph of $G$. If $H$ is not connected, then the left sum
	is $\infty$ while the right side is bounded, and therefore they
	are distinct. We continue assuming that $H$ is connected.
	By \cref{le:convex-boundary}, there exist distinct vertices $x_0,y_0\in \partial H$
	such that $d_H(x_0,y_0)\ge d_{\Hout}(x_0,y_0)$.
	Let $\pi_0$ be a shortest $x_0$-to-$y_0$ path in $H$
	and let $\pi'_0$ be a shortest $x_0$-to-$y_0$ path in $\Hout$.
	Note that $\pi'_0$ indeed exists because $d_{\Hout}(x_0,y_0)$ is bounded.
	We then have $\ell(\pi_0) = d_H(x_0,y_0) \ge d_{\Hout}(x_0,y_0) = \ell(\pi'_0)$.
	Since $\pi'_0$ has edges outside $E(H)$, we have $\hat \ell(\pi'_0) < \ell(\pi'_0)$,
	and therefore 
	\[
		\hat d_G(x_0,y_0) \le \hat \ell(\pi'_0) < \ell(\pi'_0) \le \ell(\pi_0) = d_H(x_0,y_0). 
	\]
	In summary, $\hat d_G(x_0,y_0)< d_H(x_0,y_0)$ for some $x_0,y_0\in \partial H$.
	In general, for all $x,y\in V(H)$ we have 
	$\hat d_G(x,y)\le d_G(x,y) \le d_H(x,y)$
	because $\hat\ell(e) \le \ell(e)$
	for all edges $e\in E(G)$ (for the first inequality) 
	and $H$ is a subgraph of $G$ (for the second inequality).
	From this it follows that for $U$ such that $\partial H \subseteq U\subseteq V(H)$
	we have
	\[
		\sum_{x,y\in U} \hat d_G(x,y) < \sum_{x,y\in U} d_H(x,y) .\qedhere
	\]
\end{proof}

\section{Conditional lower bound}
\label{sec:lower}

Our conditional lower bound is based on the following assumptions,
where $\langle \cdot, \cdot \rangle$ is the inner product
and the diameter of a graph $G$ is $\diam(G)=\max_{x,y\in V(G)} d_G(x,y)$.

\begin{OVC}[OVC,~\cite{Williams05}]
For every $\eps>0$ there is a $c\ge 1$ such that the following problem
cannot be solved in $O(n^{2-\eps})$ time: given $n$ binary vectors
$v_1,\dots,v_n\in \{0,1\}^d$, where $d=\lceil c\log n \rceil$, 
are there $i,j\in [n]$ such $\langle v_i, v_j\rangle = 0$?
\end{OVC}

\begin{SETH}[SETH,~\cite{ImpagliazzoP01,ImpagliazzoPZ01}]
For every $\eps>0$ there exists an integer $k\ge 3$ such that
$k$-SAT with $n$ variables cannot be solved in $O(2^{(1-\eps)n})$ time.
\end{SETH}

Williams~\cite{Williams05} posed OVC and showed that it is implied
by SETH. One of the interesting consequences of OVC, and therefore SETH, 
is the following result.

\begin{theorem}[Consequence of Theorem 9 in Roditty and Vassilevska Williams~\cite{RodittyW13}]
\label{th:diameter}
	If for some constant $\eps>0$ one can distinguish in $O(n^{2-\eps})$ time 
	between diameter $2$ and $3$ in undirected graphs with $n$ vertices 
	and $O(n)$ edges, all of unit length, 
	then OVC and SETH fail.
\end{theorem}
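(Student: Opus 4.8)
The plan is to reduce Orthogonal Vectors (OV) to the problem of distinguishing diameter $2$ from diameter $3$ in a sparse unit-length graph, following the idea behind Theorem~9 of~\cite{RodittyW13}. Given an OV instance $v_1,\dots,v_n\in\{0,1\}^d$ with $d=\lceil c\log n\rceil$, I would build a graph $G$ on the vertex set $A\cup B\cup C\cup\{\alpha,\beta\}$, where $A=\{a_1,\dots,a_n\}$ and $B=\{b_1,\dots,b_n\}$ are two copies of the vectors and $C=\{c_1,\dots,c_d\}$ has one vertex per coordinate. Its edges are: $a_ic_k$ for every $i,k$ with $v_i[k]=1$; $b_jc_k$ for every $j,k$ with $v_j[k]=1$; all pairs $\alpha u$ with $u\in A\cup C$; all pairs $\beta u$ with $u\in B\cup C$; and the single edge $\alpha\beta$. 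Every edge gets unit length. This graph is connected, has $2n+d+2=O(n)$ vertices, and has at most $2nd+2(n+d)+1=O(n\log n)$ edges. To make the statement hold verbatim I would additionally attach $\Theta(n\log n)$ fresh vertices $z_1,z_2,\dots$, each adjacent to both $\alpha$ and $\beta$; this keeps $G$ connected, gives it $N=\Theta(n\log n)$ vertices and $\Theta(N)$ edges, and, as one checks, changes none of the distances analysed below.

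The heart of the argument is the claim that $G$ has diameter $3$ if the OV instance has an orthogonal pair and diameter $2$ otherwise. I would first verify that $d_G(x,y)\le 2$ for every pair of vertices not of the form $\{a_i,b_j\}$: any two vertices among $A\cup C$ have the common neighbour $\alpha$, any two among $B\cup C$ have the common neighbour $\beta$, the vertices $\alpha$ and $\beta$ are adjacent and dominate $A\cup C$ and $B\cup C$ respectively, $a_i$ reaches $\beta$ via $a_i,\alpha,\beta$ (and symmetrically $b_j$ reaches $\alpha$), and each padding vertex $z_t$ is within distance $2$ of everything through $\alpha$ or $\beta$. For a pair $a_i,b_j$ there is no edge between $A$ and $B$, so $d_G(a_i,b_j)\ge 2$; the walk $a_i,\alpha,\beta,b_j$ shows $d_G(a_i,b_j)\le 3$; and $d_G(a_i,b_j)=2$ exactly when $a_i$ and $b_j$ have a common neighbour, which, since $\alpha\notin N(b_j)$ and $\beta\notin N(a_i)$, must be some $c_k$ with $v_i[k]=v_j[k]=1$, i.e.\ exactly when $\langle v_i,v_j\rangle\neq 0$. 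Hence $d_G(a_i,b_j)=3$ iff $\langle v_i,v_j\rangle=0$, so the diameter is $3$ iff an orthogonal pair exists and $2$ otherwise; this also treats an all-zero vector $v_i$ correctly, since it is orthogonal to every vector and indeed gives $d_G(a_i,b_j)=3$ for all $j$.

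Given the construction and this claim, the reduction finishes quickly. An algorithm distinguishing diameter $2$ from $3$ in $O(N^{2-\eps})$ time on $N$-vertex graphs with $\Theta(N)$ unit-length edges, applied to $G$ (which is built in $O(n\log n)$ time), would decide the OV instance in $O((n\log n)^{2-\eps})=O(n^{2-\eps/2})$ time once $n$ is large enough. Taking $c$ to be the constant that OVC supplies for $\eps'=\eps/2$, this contradicts OVC; and since SETH implies OVC~\cite{Williams05}, SETH fails as well. The step that needs genuine care is the diameter characterization---exhaustively confirming that every pair outside $A\times B$ is at distance at most $2$, and that $d_G(a_i,b_j)=2$ forces a shared coordinate---together with the minor bookkeeping (the padding, the $O(n\log n)$ construction time, absorbing the polylogarithmic factor into the exponent) needed to make the sparse-graph statement hold as written; everything else is routine.
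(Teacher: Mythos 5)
Your construction is the standard Orthogonal-Vectors-to-diameter reduction (two copies of the vector set, one vertex per coordinate, plus the two dominating vertices $\alpha,\beta$ joined by an edge), and your case analysis of the distances, the handling of the all-zero vector, the padding to make the edge count $\Theta(N)$, and the absorption of the $\log$ factors into the exponent are all correct. The paper does not prove this statement itself---it imports it as a consequence of Theorem~9 of Roditty and Vassilevska Williams---and your argument is essentially a faithful reconstruction of that cited reduction, so it matches the intended approach.
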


\begin{theorem}
\label{th:reduction}
	Assume that, for some $\eps>0$,	there is an algorithm that in  
	$O(n^{2-\eps})$ time solves the following problem:
	given an undirected graph $G$ with $n$ vertices and $O(n)$ edges, 
	all of unit length, and given a subgraph $H$
	of $G$, decide whether $H$ is an isometric subgraph of $G$.
	Then OVC and SETH fail.
	
	The same result holds for deciding if $H$ is a convex subgraph of $G$.
\end{theorem}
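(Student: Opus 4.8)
The plan is to reduce from the diameter-distinguishing problem of \cref{th:diameter}. Given an undirected unweighted graph $G_0$ with $n$ vertices and $O(n)$ edges for which we want to distinguish $\diam(G_0)=2$ from $\diam(G_0)=3$, I would build a host graph $G$ together with a subgraph $H$ so that $H$ is isometric (resp.\ convex) in $G$ if and only if $\diam(G_0)\le 2$. The natural construction: take two disjoint copies $G_1,G_2$ of $G_0$, with $v^{(1)}$ and $v^{(2)}$ the two copies of a vertex $v\in V(G_0)$. Let $H$ be essentially the copy $G_1$ (possibly together with an extra gadget vertex), and add to $\Hout$ a small gadget — for instance a single apex vertex $a$ adjacent to every $v^{(2)}$, plus the edges $v^{(1)}v^{(2)}$ for all $v$ — arranged so that the only way for two boundary vertices $u^{(1)},w^{(1)}$ of $H$ to have a short path through $\Hout$ is to go $u^{(1)}u^{(2)}\cdots w^{(2)}w^{(1)}$ or $u^{(1)}u^{(2)}aw^{(2)}w^{(1)}$. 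Then $d_{\Hout}(u^{(1)},w^{(1)})$ encodes, up to an additive constant, $\min\{d_{G_0}(u,w),\,2\}$ or similar, while $d_H(u^{(1)},w^{(1)})=d_{G_0}(u,w)$. Tuning the gadget lengths (we are allowed unit lengths, so this must be done with subdivisions / parallel paths of different combinatorial lengths) so that the comparison $d_H$ vs $d_{\Hout}$ in \cref{le:isometric-boundary} flips exactly between the cases $d_{G_0}(u,w)\le 2$ and $d_{G_0}(u,w)=3$ is the crux.

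Concretely, the cleanest version I would aim for: add a single new vertex $a$; let $\Hout$ consist of the edges $\{a v^{(2)} : v\in V(G_0)\}$ together with a copy of $G_0$ on the vertices $v^{(2)}$ subdivided once (so each original $G_0$-edge becomes a path of length $2$) and the ``identification'' edges $v^{(1)}v^{(2)}$; let $H$ be $G_1$ (the unsubdivided copy on the $v^{(1)}$) plus, say, the vertex $a$ joined to $H$ by nothing, forcing $\partial H = V(G_1)$ appropriately. First I would verify that for $u\ne w$ in $V(G_0)$ the $H$-distance is $d_H(u^{(1)},w^{(1)})=d_{G_0}(u,w)$, and the $\Hout$-distance is $d_{\Hout}(u^{(1)},w^{(1)}) = 2 + \min\{2\cdot d_{G_0}(u,w),\,4\} = \min\{2+2d_{G_0}(u,w),\,6\}$: the two leading $+1$'s pay for the identification edges $u^{(1)}u^{(2)},w^{(1)}w^{(2)}$, then one either walks in the subdivided $G_0$ (cost $2d_{G_0}(u,w)$) or hops through the apex $a$ (cost $1+1=2$... careful — adjust so the apex route costs $4$). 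Then \cref{le:isometric-boundary} says $H$ is isometric iff $d_{G_0}(u,w)\le 2+\min\{2d_{G_0}(u,w),4\}$ for all $u,w$, which after checking the boundary cases is \emph{always} true — that is wrong, so the gadget needs the opposite orientation: subdivide $G_1$ instead, or scale $H$'s edges up. The honest statement of the obstacle: one must make $d_H$ \emph{larger} than a path through $\Hout$ exactly when a diameter-$3$ pair exists, so $H$'s internal copy should have its edges ``stretched'' (subdivided) relative to $\Hout$'s copy, and the apex gadget in $\Hout$ should give a universal shortcut of length corresponding to ``$2$'' in the stretched scale. With the right scaling, a boundary pair $(u^{(1)},w^{(1)})$ violates the isometric condition iff $d_{G_0}(u,w)\ge 3$, i.e.\ iff $\diam(G_0)=3$.

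I would then carry out the steps in this order: (1) fix the precise construction and scaling, double-checking all edge lengths stay unit by using subdivisions, and confirm $|V(G)|,|E(G)|=O(n)$; (2) prove the formula for $d_H$ between boundary pairs (trivial: the interior copy is an isometric-by-construction stretched copy of $G_0$, and the apex $a$ has no $H$-edges); (3) prove the formula for $d_{\Hout}$ between boundary pairs, the key point being that \emph{any} $\Hout$-path between $u^{(1)}$ and $w^{(1)}$ must use the identification edges at both ends and then either traverse the exterior copy of $G_0$ or detour through $a$, giving the claimed $\min$; (4) plug into \cref{le:isometric-boundary} and verify the equivalence with $\diam(G_0)\le 2$; (5) observe that the exact same construction, now read through \cref{le:convex-boundary} (strict inequality), and possibly with the apex route tuned so the critical comparison is still at the boundary between $d_{G_0}=2$ and $d_{G_0}=3$ — or with a tiny modification adding a parallel shortest path to break ties — handles convexity, using that distinguishing $\diam\le 2$ from $\diam\ge 3$ is the same hard problem; (6) conclude that a subquadratic isometry/convexity test would distinguish diameter $2$ from $3$ in subquadratic time on $O(n)$-edge graphs, contradicting \cref{th:diameter}, hence OVC and SETH fail.

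The main obstacle is step (1) together with the tie-handling in step (5): getting a single gadget, with honest unit-length edges, for which the $d_H$-versus-$d_{\Hout}$ comparison on boundary pairs is \emph{strictly} on one side when $d_{G_0}(u,w)\le 2$ and strictly on the other when $d_{G_0}(u,w)=3$, and for which the \emph{same} gadget (or an obviously analogous one) simultaneously serves the strict-inequality convexity criterion. I expect this to require choosing the stretch factor for the interior copy and the apex-detour length so that no boundary pair ever lands exactly on the equality threshold — which is exactly the kind of one-off constant-chasing that is easy to get wrong, so I would verify it by an explicit small case before writing it up.
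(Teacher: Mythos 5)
Your overall strategy is the paper's: reduce from distinguishing diameter $2$ versus $3$ (\cref{th:diameter}), make the subgraph a ``stretched'' copy of the input graph $G_0$, and give every pair of original vertices a universal shortcut through the complement whose length sits strictly between the stretched distance of a diameter-$2$ pair and that of a diameter-$3$ pair. However, as written the proposal has a genuine gap: the one concrete gadget you specify is one you yourself verify to be wrong (it makes the isometric condition vacuously true), and the corrected version is left as unresolved ``constant-chasing,'' with the tie-handling for the convexity criterion explicitly flagged as an open obstacle. Since the entire content of the theorem beyond citing \cref{th:diameter} \emph{is} this construction, the proof is not complete until those constants are pinned down.

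For comparison, the paper's instantiation is simpler than what you sketch and resolves every issue you worry about. Take the subgraph to be $H_3$, obtained from $G_0$ by subdividing each edge \emph{twice} (stretch factor $3$), and let the host graph be $H_3$ plus a single apex $v_*$ joined to each original vertex by a $4$-edge path; no second (exterior) copy of $G_0$ is needed, so $\Hout$ is just the star of $4$-edge paths and the shortcut between any two original vertices has length exactly $8$. Then $\partial H_3=V(G_0)$, $d_{H_3}(x,y)=3d_{G_0}(x,y)$, and $d_G(x,y)=\min\{3d_{G_0}(x,y),8\}$: diameter $2$ gives $3d_{G_0}(x,y)\le 6<8$ for all boundary pairs (so $H_3$ is isometric, and convex via \cref{le:convex-sum-boundary} with the edges outside $H_3$ shortened by $\delta<1/10$, since $6<8-8\delta$), while diameter $3$ gives a pair with $d_{H_3}=9>8\ge d_G$, killing both isometry and convexity. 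The thresholds $6$, $8$, $9$ have integer slack on both sides, so no boundary pair ever lands on the equality threshold and the same gadget serves the strict-inequality convexity test without modification --- exactly the tie issue you feared. If you adopt these constants and verify the distance formulas (which follow because every $\Hout$-path between original vertices must pass through $v_*$), your outline becomes the paper's proof.
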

\begin{proof}
	We will reduce from the diameter problem considered in \cref{th:diameter}.
	Let $H$ be a graph with $O(n)$ vertices and $O(n)$ edges, all edges of
	unit length, for
	which we want to distinguish whether it has diameter $2$ or $3$. 
	Consider the graph $G=G(H)$ defined as follows; see \cref{fig:reduction}
	for an example.
	We first construct the graph $H_3$ from $H$ by subdividing each edge
	of $H$ twice.
	Then, we add a new vertex $v_*$ and connect it to each vertex of $V(H)$ 
	with a $4$-edge path using new interior vertices. Let $G=G(H)$ be the resulting
	graph.
	Note that $H_3$ is a subgraph of $G$ and $\partial H_3 = V(H)$.

	\begin{figure}
	\centering
		\includegraphics[page=1,width=\textwidth]{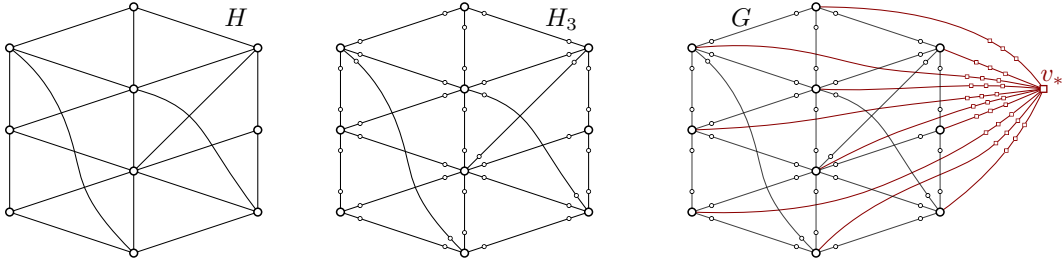}
		\caption{Example for the construction of \cref{th:reduction}.
			Left: a graph $H$. Center: the graph $H_3$. 
			Right: the graph $G$ with $H_3$ as a subgraph.}
		\label{fig:reduction}
	\end{figure}

	We assign unit length to each edge of $G$.
	For all $x,y\in V(H)$, we have $d_{H_3}(x,y)= 3\cdot d_H(x,y)$
	and $d_G(x,y)= \min \{ 3\cdot d_H(x,y), 8 \}$,
	where the $8$ comes from the path in $G$ through $v_*$.

	We now show that $H_3$ is an isometric subgraph of $G$ if and only if $H$
	has diameter $2$.
	If $H$ has diameter $2$, then for all $x,y\in V(H)$ we have
	\[
		d_G(x,y)= \min \{ 3\cdot d_H(x,y), 8 \} = 3\cdot d_H(x,y) = d_{H_3}(x,y).
	\]
	Since $\partial H_3=V(H)$,
	we have $\sum_{x,y\in \partial H_3} d_G(x,y)= \sum_{x,y\in \partial H_3} d_{H_3}(x,y)$
	and $H_3$ is an isometric subgraph of $G$ because of \cref{le:isometric-sum-boundary}.
	If $H$ has diameter $3$, then there is 
	some $x_0,y_0\in V(H)$ with $d_H(x_0,y_0)=3$,
	and therefore $d_{H_3}(x_0,y_0) = 3\cdot d_H(x_0,y_0)= 9 > 8 \ge d_G(x_0,y_0)$.
	It follows that $H_3$ is not an isometric subgraph of $G$.
	
	We now show that $H_3$ is a convex subgraph of $G$ if and only if $H$
	has diameter $2$. 
	Let $\hat d$ denote the distance in $G$ when we decrease the length of the edges
	of $E(G)\setminus E(H_3)$ by a sufficiently small value $\delta>0$. 
	For concreteness, we can take $\delta < 1/10$ in our discussion. 
	Then, for all $x,y\in V(H)$, we have $d_{H_3}(x,y)= 3\cdot d_H(x,y)$
	and $\hat d_G(x,y)= \min \{ 3\cdot d_H(x,y), 8-8\delta \}$.
	
	If $H$ has diameter $2$, then, for all $x,y\in V(H)$, we have
	\[
		\hat d_G(x,y)= \min \{ 3\cdot d_H(x,y), 8-8\delta \} = 3\cdot d_H(x,y) = d_{H_3}(x,y),
	\]
	where we have used that $3\cdot d_H(x,y)\le 6$ and $8\delta <1$.
	Since $\partial H_3=V(H)$,
	we have $\sum_{x,y\in \partial H_3} \hat d_G(x,y)= \sum_{x,y\in \partial H_3} d_{H_3}(x,y)$,
	and $H_3$ is a convex subgraph of $G$ because of \cref{le:convex-sum-boundary}.

	If $H$ has diameter $3$, then we have already seen that $H_3$ is not 
	an isometric subgraph of $G$, and thus cannot be a convex subgraph.
	
	If $H$ has $O(n)$ vertices and $O(n)$ edges, then 
	$G$ has $|V(H)|+2\cdot|E(H)|+3\cdot|V(H)|+1=O(n)$ vertices 
	and $3\cdot|E(H)|+4\cdot|V(H)|=O(n)$ edges. The construction
	of $G$ and $H_3$ from $H$ takes linear time.
	It follows that, if we could decide whether $H_3$ is isometric (or convex) 
	in $O(n^{2-\eps})$ time, for some $\eps>0$, we could decide whether $H$
	has diameter $2$ or $3$ in $O(n^{2-\eps})$ time, 
	and therefore OVC and SETH would be false by \cref{th:diameter}.
\end{proof}

Note that in our reduction the host graph has diameter $O(1)$.
It seems that adapting the original proof of \cref{th:diameter} by
\cite{RodittyW13} one can obtain host graphs with even smaller diameter.
We did not pursue the details of such a construction.

\section{Subquadratic algorithms for special classes of graphs}
\label{sec:viaWiener}

In this section we provide algorithms to test whether a 
subgraph is isometric or convex, when the host
graph is planar or has small treewidth.
The idea is to use the criteria of \cref{le:isometric-sum,le:convex-sum}
together with modifications of previous algorithms to compute the classical
Wiener index.

\subsection{Shortening some edges a bit}
\label{sec:shortening}
To test whether $H\subseteq G$ is convex using \cref{le:convex-sum}
we have to shorten the edges of $E(G)\setminus E(H)$ by a small
enough amount $\delta$, where $0<\delta\le \eps(G,\ell)$. 
We do not know how to compute $\eps(G,\ell)$ efficiently, 
or some other value that could replace it. We solve this by 
perturbing the edge lengths infinitesimally, as follows.
(For unit-length edges this is easy by taking $\delta=1/2n$.)

We define a new, composite length $\ell_2$ for $G$ using $2$-tuples.
The intuition is that a pair $(a,b)\in \RR_{\ge 0}\times \ZZ$ 
represents the real value $a+b\cdot \delta$ for an infinitesimal $\delta>0$.
This means that we can compare two lengths $(a,b)$ and $(a',b')$ using
the lexicographic comparison: $(a,b)$ is smaller than $(a',b')$, denoted
by $(a,b)\prec (a',b')$,
if and only if $a<a'$ or if $a=a'$ and $b<b'$. 
This lexicographic comparison is compatible with the test of
whether $a+b\cdot \delta < a'+b'\cdot \delta$ for an infinitesimal $\delta>0$.

To carry the perturbation, we define a new edge length $\ell_2$ in the following way:
\[
	\forall e\in E(G):~~~~ \ell_2(e)= \begin{cases}
			(\ell(e),0) &\text{if $e\in E(H)$,}\\
			(\ell(e),-1) &\text{if $e\in E(G)\setminus E(H)$.}
		\end{cases}
\]
The length of a walk (or path) $\pi$ is defined as the vector-sum of the 
the $2$-tuples $\ell_2(e)$ over $e\in E(\pi)$, with multiplicity.
Therefore, we have
\[
	\ell_2(\pi) = \big( \ell(\pi),~ -|E(\pi)\setminus E(H)| \big),
\]
which represents the value $\ell(\pi)-\delta\cdot |E(\pi)\setminus E(H)|$
for an infinitesimal $\delta>0$ and where $E(\pi)$ is interpreted
as a multiset.

Classical algorithms for computing shortest paths work in the
comparison-addition model and thus can compute distances
using this new $2$-tuple distances without
an asymptotic increase in the running time.
(For the sake of comparison, it is not clear how to adapt efficiently 
an algorithm that would multiply edge lengths because 
powers of $\delta$ would start showing up and a constant-size tuple would
not suffice anymore.)

\subsection{Graphs of small treewidth}
Let $G$ be a graph, possibly with edge lengths,
and let $H$ be a subgraph of $G$. Let $\tw(G)$ denote the treewidth
of $G$. It is well-known that $H$ has treewidth at most $\tw(G)$.

Cabello and Knauer~\cite{CabelloK09} showed that, for each
constant $k\ge 3$, the Wiener index of graphs of treewidth at most $k$ 
can be computed in $O(n \log^{k-1} n)$ time;
the constant hidden in the $O$-notation depends on $k$.
Bringmann, Husfeldt and Magnusson~\cite{BringmannHM20} carried out the analysis
making the dependency on the treewidth explicit and showed that
the Wiener index can be computed in
$n^{1+\eps} 2^{O(\tw(G))}$, for each constant $\eps>0$;
the constant hidden in the $O$-notation depends on $\eps$.

Both papers use data structures for orthogonal range searching. The vertices of $G$
are used to define a point set and to define ranges.
It is easy to see that those results can be adapted to compute the sum
$\sum_{x,y\in U} d_G(x,y)$ for any given $U\subseteq V(G)$.
Indeed, it suffices to define the points and the ranges only for the vertices of $U$.
This is explicitly mentioned in~\cite[Section 5]{Cabello22}.
The running times of the results are not affected by this. (Actually, 
if $\log |U|= o(\log n)$, the asymptotic running times can be
slightly improved, but we omit this in our worst-case analysis.)

Next, we note that the algorithms in \cite{BringmannHM20, CabelloK09} 
work in the comparison-addition model.
This means that they can handle edge lengths defined by a $2$-tuple,
as discussed in \cref{sec:shortening}.
We conclude that those algorithms can be used, without changing the asymptotic
running time, to compute
\[
	\sum_{x,y\in V(H)}d_G(x,y),~~~	\sum_{x,y\in V(H)}\hat d_G(x,y),
		~~~\sum_{x,y\in V(H)}d_H(x,y),
\]
where $\hat d_G(x,y)$ is defined via $2$-tuples to model a shortening of the edges
of $E(G)\setminus E(H)$ by an infinitesimal $\delta>0$.
We can then compare them, taking into account that the $\ell_2$-length
$(a,b)$ is equal to the $\ell$-length $a$ if and only if $b=0$.
From Lemmas~\ref{le:isometric-sum} and~\ref{le:convex-sum}, the following is now immediate.

\begin{theorem}
	Let $k\ge 3$ be a constant. Given a graph $G$ with $n$ vertices and treewidth at most $k$, 
	we can test whether a given subgraph of $G$ is convex or isometric in $O(n \log^{k-1} n)$ time.
	The constant hidden in the $O$-notation depends on $k$.

	We can decide whether a given subgraph of $G$ is convex or isometric
	in $n^{1+\eps} 2^{O(\tw(G))}$ time, for each constant $\eps>0$.
	The constant hidden in the $O$-notation depends on $\eps$.
\end{theorem}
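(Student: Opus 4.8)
The plan is to combine the sum characterizations of \cref{le:isometric-sum,le:convex-sum} with the existing near-linear-time algorithms for the classical Wiener index on graphs of bounded treewidth, exactly along the lines sketched above. First I would record the standard fact that every subgraph $H$ of $G$ satisfies $\tw(H)\le \tw(G)$, so that any treewidth-sensitive algorithm we run on $G$ may also be run on $H$ within the same time bound. The three quantities we need are
\[
	S_G=\sum_{x,y\in V(H)} d_G(x,y),\qquad
	\hat S_G=\sum_{x,y\in V(H)} \hat d_G(x,y),\qquad
	S_H=\sum_{x,y\in V(H)} d_H(x,y),
\]
where $\hat d_G$ is the distance in $G$ after shortening the edges of $E(G)\setminus E(H)$ by an infinitesimal amount, realized by the $2$-tuple lengths $\ell_2$ of \cref{sec:shortening}. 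By \cref{le:isometric-sum}, $H$ is isometric iff $S_G=S_H$; by \cref{le:convex-sum}, $H$ is convex iff $\hat S_G=S_H$, valid for every $0<\delta\le\eps$ and hence for the symbolic $\delta$ encoded by $\ell_2$. As a trivial preprocessing step I would check connectivity of $H$ in linear time: if $H$ is disconnected then, since $G$ is connected, $H$ is neither isometric nor convex, and otherwise all the distances above are finite.

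Next I would invoke the algorithm of Cabello and Knauer~\cite{CabelloK09}, which computes the Wiener index of a graph of constant treewidth $k\ge 3$ in $O(n\log^{k-1} n)$ time, and the refinement of Bringmann, Husfeldt and Magnusson~\cite{BringmannHM20}, which achieves $n^{1+\eps}2^{O(\tw(G))}$ for every constant $\eps>0$. The two observations that make these applicable are: (i) both algorithms manipulate path lengths only by additions and comparisons, never by multiplication or any other operation, so they run verbatim on the $2$-tuple lengths $\ell_2$ with merely a constant-factor overhead, precisely as discussed in \cref{sec:shortening}; and (ii) both are built on orthogonal range searching in which each vertex contributes one point and one range, so one may restrict the point set and the ranges to an arbitrary $U\subseteq V(G)$ and thereby compute $\sum_{x,y\in U} d_G(x,y)$ within the same asymptotic bound — as already noted in~\cite[Section~5]{Cabello22}. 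Taking $U=V(H)$ in $G$ yields $S_G$ and (with the $\ell_2$ lengths) $\hat S_G$; running the same algorithm on $H$, which has treewidth at most $k$, with $U=V(H)$ yields $S_H$. Finally one compares the $2$-tuple values, using that the $\ell_2$-length $(a,b)$ equals the $\ell$-length $a$ exactly when $b=0$, and reports isometry from $S_G=S_H$ and convexity from $\hat S_G=S_H$. The dependence of the hidden constant on $k$, respectively on $\eps$, is inherited directly from the cited results.

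The only part that requires genuine care — and the main obstacle — is points (i) and (ii): one must actually look inside the two cited algorithms and confirm that edge and path lengths are handled solely through addition and comparison, and that the range-searching data structures depend only on the chosen vertex subset rather than on all of $V(G)$. Neither verification is deep, and both are essentially made explicit already in~\cite{CabelloK09,BringmannHM20,Cabello22}; in particular no blow-up in the running time is incurred by either adaptation, so the two stated bounds follow.
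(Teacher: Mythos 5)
Your proposal is correct and follows essentially the same route as the paper: the characterizations of \cref{le:isometric-sum,le:convex-sum}, the Wiener-index algorithms of \cite{CabelloK09} and \cite{BringmannHM20} restricted to the marked set $U=V(H)$ via the range-searching point set, and the $2$-tuple lengths of \cref{sec:shortening} to realize the infinitesimal perturbation. The explicit preliminary connectivity check on $H$ is a small, sensible addition not spelled out in the paper, but it does not change the argument.
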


\subsection{Planar graphs}

Cabello~\cite{Cabello19} gave a randomized algorithm to compute the Wiener index 
of a planar graph $G$ in $\tilde O(n^{11/6})$ expected time.
This was improved by Gawrychowski et al.~\cite{GawrychowskiKMS21},
who provided a deterministic algorithm taking $\tilde O(n^{5/3})$ time.
Both algorithms have the property that they can compute
$\sum_{x,y\in U} d_G(x,y)$ for any given $U\subseteq V(G)$.
Also, both algorithms work in the comparison-addition model. 
Therefore, they can also be used to obtain
\[
	\sum_{x,y\in V(H)}d_G(x,y),~~~	\sum_{x,y\in V(H)}\hat d_G(x,y),
		~~~\sum_{x,y\in V(H)}d_H(x,y),
\]
where $\hat d_G(x,y)$ is defined via $2$-tuples to model a shortening
by an infinitesimal $\delta>0$.
This shows the following.

\begin{theorem}
	Given a planar graph $G$ with $n$ vertices, we can test whether a 
	given subgraph of $G$ is convex or isometric in $\tilde O(n^{5/3})$ time.
\end{theorem}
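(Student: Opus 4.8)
The plan is to reduce the test to three Wiener-index-type computations and then invoke the planar Wiener-index algorithm of Gawrychowski et al.~\cite{GawrychowskiKMS21} as a black box. As a trivial preprocessing step I would first check in $O(n)$ time whether $H$ is connected; if it is not, then, since $G$ is connected, $H$ is neither isometric nor convex, and we are done. So assume $H$ is connected. By \cref{le:isometric-sum}, $H$ is isometric in $G$ if and only if $\sum_{x,y\in V(H)} d_G(x,y) = \sum_{x,y\in V(H)} d_H(x,y)$, and, by \cref{le:convex-sum}, $H$ is convex in $G$ if and only if $\sum_{x,y\in V(H)} \hat d_G(x,y) = \sum_{x,y\in V(H)} d_H(x,y)$, where $\hat d_G$ is the distance obtained after shortening every edge of $E(G)\setminus E(H)$ by some $\delta$ with $0<\delta\le\eps(G,\ell)$. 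Hence it suffices to compute the three quantities
\[
	\sum_{x,y\in V(H)} d_G(x,y), \qquad \sum_{x,y\in V(H)} \hat d_G(x,y), \qquad \sum_{x,y\in V(H)} d_H(x,y)
\]
and perform two comparisons.

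Next I would observe that each of these three sums is an instance of the marked Wiener index that the planar algorithm can produce. For the first sum we run the algorithm of~\cite{GawrychowskiKMS21} on $G$ with marked set $U=V(H)$; this takes $\tilde O(n^{5/3})$ time. For the third sum we use that $H$, being a subgraph of the planar graph $G$, is itself planar, and run the same algorithm on $H$ with $U=V(H)$. For the second sum we sidestep the need to compute $\eps(G,\ell)$ explicitly: as in \cref{sec:shortening}, we replace $\ell$ by the composite $2$-tuple length $\ell_2$, which represents $\ell(e)-\delta$ on the edges outside $H$ for an infinitesimal $\delta>0$. Since the cited planar Wiener-index algorithms access the edge lengths only through additions and comparisons of path lengths, they run verbatim on these $2$-tuple lengths within the same $\tilde O(n^{5/3})$ bound, and by \cref{le:perturbation} the distances they return coincide with $\hat d_G$ for a sufficiently small $\delta>0$.

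Finally, I would carry out the two comparisons directly on the $\ell_2$-valued sums, using that a $2$-tuple $(a,b)$ represents the real number $a$ precisely when $b=0$; thus $\sum_{x,y\in V(H)} \hat d_G(x,y) = \sum_{x,y\in V(H)} d_H(x,y)$ as real numbers if and only if the corresponding $2$-tuple sums agree, and similarly for the isometric test. Altogether the procedure consists of three invocations of the planar Wiener-index algorithm plus $O(n)$ additional work, so the total running time is $\tilde O(n^{5/3})$. I do not expect a genuine obstacle: the only points requiring care are that the algorithms of~\cite{Cabello19,GawrychowskiKMS21} support the marked version $\sum_{x,y\in U}$ and that they tolerate the $2$-tuple edge lengths, and both facts have already been recorded in the discussion preceding the statement.
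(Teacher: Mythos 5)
Your proposal is correct and follows essentially the same route as the paper: reduce to the three marked Wiener-index sums via \cref{le:isometric-sum,le:convex-sum}, compute them with the planar algorithm of Gawrychowski et al.\ run on $G$, on $G$ with the $2$-tuple lengths of \cref{sec:shortening}, and on the planar subgraph $H$, then compare. The explicit connectivity pre-check on $H$ is a sensible (and harmless) addition that the paper leaves implicit.
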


\subsection{Discussion}
We next present classes of graphs illustrating two limitations of
our technique based on the Wiener index with marked vertices.
One limitation is that for testing convexity we have to consider edge-weights, 
even if the original graph is unweighted; the other limitation is that subgraphs 
may lose some useful properties enojoyed by the host graph.

Let us consider the class of $K_h$-minor-free graphs,
where $K_h$ denotes the complete graph on $h$ vertices.
Ducoffe et al.~\cite{DucoffeHV22} showed how to compute the diameter of such graphs 
in subquadratic time for the unweighted version, but left open the 
question of computing the Wiener index in subquadratic time. 
Recent works by Le, Wulff{-}Nilsen, Karczmarz and Zheng~\cite{KarczmarzZ25,LeW24}
have improved the dependency on $h$ and their new techniques 
also work for the Wiener index, but still only for the case of unweighted graphs.
The currently best result~\cite[Theorem~1.3]{KarczmarzZ25}
computes the Wiener index in \emph{unweighted} graphs 
without a $K_h$-minor in $\tilde O(n^{2-1/(3h-2)})$ time. 
The technique can be easily adapted to compute the Wiener index with marked vertices.
Since a subraph of a $K_h$-minor-free graph is obviously also
$K_h$-minor-free, we can use this algorithm to test whether a given subgraph 
is isometric via Lemma~\ref{le:isometric-sum}.
We summarize. 

\begin{theorem}
	Given an \emph{unweighted} $K_h$-minor-free graph $G$ with $n$ vertices, 
	we can test whether a given subgraph of $G$ is isometric 
	in $\tilde O(n^{2-1/(3h-2)})$ time.
\end{theorem}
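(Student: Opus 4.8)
The plan is to reduce the test directly to two invocations of the Wiener-index algorithm for unweighted $K_h$-minor-free graphs, one on $G$ and one on $H$, and then apply the criterion of \cref{le:isometric-sum}. First I would record the elementary structural observation that makes this possible: since $H$ is a subgraph of $G$ and $G$ has no $K_h$-minor, $H$ has no $K_h$-minor either, and of course $|V(H)|\le n$. Hence the algorithm of Karczmarz and Zheng~\cite[Theorem~1.3]{KarczmarzZ25} is applicable to both $G$ and $H$, each run costing $\tilde O(n^{2-1/(3h-2)})$ time.

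Next I would spell out the two quantities to be computed. By \cref{le:isometric-sum}, $H$ is isometric in $G$ if and only if
\[
	\sum_{x,y\in V(H)} d_G(x,y) \;=\; \sum_{x,y\in V(H)} d_H(x,y) .
\]
The right-hand side is exactly the classical Wiener index of $H$, obtained in $\tilde O(n^{2-1/(3h-2)})$ time. The left-hand side is the Wiener index of $G$ with marked vertex set $U=V(H)$; as noted in the discussion preceding the statement, the technique of~\cite{KarczmarzZ25} adapts to this marked variant without changing the asymptotic running time, by restricting the pair-sums taken over the shortest-path-separator recursion (equivalently, the point sets and ranges used internally) to the marked vertices, exactly as was done for bounded treewidth in~\cite{BringmannHM20,CabelloK09,Cabello22} and for planar graphs in~\cite{Cabello19,GawrychowskiKMS21}. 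Having both sums, I would simply compare them: by \cref{le:isometric-sum} they are equal precisely when $H$ is isometric. The total running time is dominated by the two Wiener-index computations, i.e.\ $\tilde O(n^{2-1/(3h-2)})$.

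The only step that requires genuine care — and the one I would flag as the main obstacle — is verifying that the marked-vertex Wiener index is still computable within the claimed bound by the machinery of~\cite{KarczmarzZ25}; this is a routine inspection of their recursion rather than a new idea, but it must be carried out. I would also emphasize that the argument is specific to the isometric case: the analogous criterion for convexity (\cref{le:convex-sum}) needs the perturbed distances $\hat d_G$, hence genuine edge weights, whereas the subquadratic Wiener-index algorithm for $K_h$-minor-free graphs is known only in the unweighted setting — so no corresponding statement for convex subgraphs follows, matching the caveat already stated in the paper.
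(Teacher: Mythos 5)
Your proposal is correct and takes essentially the same route as the paper: the paper likewise applies the Karczmarz--Zheng algorithm to compute the Wiener index of $G$ with marked vertex set $V(H)$ and the Wiener index of the (still $K_h$-minor-free) subgraph $H$, then compares the two sums via \cref{le:isometric-sum}. The caveats you flag --- that the marked-vertex adaptation of \cite{KarczmarzZ25} must be checked, and that the convex case fails because \cref{le:convex-sum} requires perturbed (hence weighted) distances --- are exactly the points made in the paper's surrounding discussion.
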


Since currently we do not know how to compute in subquadratic time
the Wiener index of $K_h$-minor-free graphs with edge weights, 
we cannot make the modification of edge-lengths needed to test 
whether a subgraph is convex via Lemma~\ref{le:convex-sum}. 
Therefore, we cannot make an analogous claim regarding testing convexity, 
even for unweighted graphs. 

\medskip

Let us turn our attention now to unweighted median graphs.
B{\'{e}}n{\'{e}}teau et al.~\cite{BeneteauCCV22} show that the Wiener index
of median graphs can be computed in linear time. In fact, they solve a 
vertex-weighted variant that generalizes the Wiener index 
with marked vertices. 
Consider the problem of testing whether a given subgraph $H$ of a median
graph $G$ is isometric. We can apply their algorithm to compute $\sum_{x,y\in V(H)} d_G(x,y)$
in linear time. However, it is not clear (at least to the author) how to compute
in subquadratic time $\sum_{x,y\in V(H)} d_H(x,y)$, even if we assume that $H$ 
is an isometric subgraph of $G$, and thus a partial cube. When we consider a subgraph
of a median graph, we lose many of its structural properties, even if the subgraph is isometric.
Similarly, for testing convexity using the Wiener index, we would need to 
consider the perturbed weights, which in general does not go well with 
many of the structural properties of median graphs.
(In contrast, as mentioned in the introduction, testing whether a subgraph
of a median graph is convex amounts to testing whether it is gated, a property
that can be tested in linear time.)

\section{Subgraphs of plane graphs defined by cycles}
\label{sec:plane}

Let $G$ be a plane graph, that is, a planar graph with a fixed embedding
in the plane. For each cycle $C$ of $G$, let $\gamma_C$ be the Jordan curve defined by $C$.
Such a cycle $C$ defines naturally the \DEF{interior graph}, 
denoted by $\interior(G,C)$,
which is the subgraph of $G$ contained in the closure of the bounded 
set of $\RR^2\setminus \gamma_C$.
Thus, $C$ is contained in $\interior(G,C)$.
In this section we consider the problem of testing whether 
$H=\interior(G,C)$ is a convex or isometric subgraph of $G$.
In $H$ we assume the embedding inherited from $G$ and 
then $C$ is a facial walk of $H$.

\subsection{Data structure for sequences of numbers}

We will use a data structure to store a sequence $a_1,\dots,a_k$ of $k$ values
in $\RR$, where each value is initialized arbitrarily, 
and supporting the following operations:
\begin{itemize}
\item $\DSinit(a_1^0,\dots,a_k^0)$: initializes the data structure, setting $a_i=a_i^0$ for each $i\in [k]$.
\item $\DSmin()$ returns $\min\{ a_1,\dots, a_k\}$.
\item $\DSadd(i,j,\Delta)$ adds the value $\Delta\in \RR$ 
  to $a_i,\dots,a_j$, where $i\le j$.
\end{itemize}

An efficient approach is to use a {\em static} balanced binary search tree storing $k$
elements with keys $[k]=\{1,\dots,k\}$, such that the node with key $i$ stores $a_i$.
The tree is then augmented with data at the nodes to represent
values that have to be added to the whole subtree. 
See for example Choi, Cabello and Ahn~\cite{ChoiCA21} for a comprehensive description.
Another alternative, which is a bit overkilling, is to use dynamic trees, 
such as cut-link trees~\cite{SleatorT83} or top trees~\cite{AlstrupHLT05}; 
in this case we maintain a path on $k$ nodes, 
where the $i$th node stores the value $a_i$. These data structures allow to add
values and to query for the minimum value in a subpath, which in our case corresponds
to a contiguous subsequence. We summarize for later reference.

\begin{lemma}
  \label{le:datastructure-numbers}
  There is a data structure to maintain a sequence
  $a_1,\dots,a_k$ of $k$ numbers that supports the 
  operations $\DSmin$, and $\DSadd$ in  
  $O(\log k)$ time per operation.  The initialization of the data
  structure, $\DSinit$, takes $O(k)$ time.
\end{lemma}

\subsection{Encoding distances within a face}

Let $C$ be a facial cycle in a plane graph $G$ with $n$ vertices,
and let $x_1,\dots,x_k$ be the vertices as we walk along $C$, 
starting from an arbitrary vertex $x_1$ of $V(C)$.
For each $i\in [k]$, let $D_i$ be the sequence of numbers telling
the distances from $x_i$ to all other vertices; thus
$D_i = \big( d_G(x_i,x_j)\big)_{j\in [k]}$.

In a seminal work, Klein~\cite{Klein05} showed 
that in $O(n \log n)$ time one can 
maintain a shortest-path tree of $G$ rooted at a vertex of $C$ 
as the root moves along $C$. 
An alternative point of view based on parametric shortest-path trees
was given by Cabello, Chambers and Erickson~\cite{CabelloCE13},
and a new approach based on divide-and-conquer
has been given by Das et al.~\cite{DasKGW22}.

The key insight is that when the root of the shortest
path tree moves along the boundary of a face, 
each directed arc of $G$ appears in the shortest path tree 
along a contiguous subsequence of roots along the facial walk.
In other words, with the move of the root through the facial
walk, each directed edge enters and exits the shortest path tree
at most once. 

These results are the basis for the following property.

\begin{lemma}
\label{le:encoding}
	In $O(n \log n)$ time we can compute
	sequences $\Lambda_2,\dots,\Lambda_k$ of
	operations with the following properties
	\begin{itemize}
		\item for $i=2,\dots , k$, each $\Lambda_i$ is 
			a sequence of $|\Lambda_i|$ operations of the type
			$\DSadd$ in a sequence $a_1,\dots,a_k$ of numbers;
		\item for $i=2,\dots , k$, the sequence $D_i$ of distances
			from $x_i$ is obtained 
			from the sequence $D_{i-1}$ performing the operations
			given in $\Lambda_i$;
		\item the sequences $\Lambda_i$ together have a linear number
			of operations; that is $\sum_i |\Lambda_i| = O(n)$.
	\end{itemize}
\end{lemma}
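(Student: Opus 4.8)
The plan is to run one of the planar multiple-source shortest-path (MSSP) algorithms~\cite{Klein05,CabelloCE13,DasKGW22} while the source slides along $C$, and to convert each elementary change of the shortest-path tree into a constant number of $\DSadd$ operations on the sequence $D$ of distances to $V(C)$. I parametrize the source by a point $s$ that, for $i=2,\dots,k$, moves along the edge $e_i=x_{i-1}x_i$ from $x_{i-1}$ to $x_i$: write $\mu\in[0,1]$ for the fraction of $e_i$ already traversed, so $s$ lies at distance $\mu\,\ell(e_i)$ from $x_{i-1}$ (think of $s$ as subdividing $e_i$). While $s$ is in the relative interior of $e_i$, every path leaving $s$ must first traverse one of the two halves of $e_i$, so the shortest-path tree rooted at $s$ consists of at most two subtrees, one reached through $x_{i-1}$ and one through $x_i$; hence $d_G(s,v)=\mu\,\ell(e_i)+d_G(x_{i-1},v)$ for every $v$ on the first side and $d_G(s,v)=(1-\mu)\,\ell(e_i)+d_G(x_i,v)$ for every $v$ on the second side. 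In particular each $d_G(s,\cdot)$ is a continuous, piecewise-linear function of $\mu$ with slope $+\ell(e_i)$ on the first side and $-\ell(e_i)$ on the second.

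I would use two facts about this process. From the MSSP machinery: the tree changes only at finitely many \emph{pivot events}, each replacing a tree arc $\dart{u}{v}$ by another arc $\dart{u'}{v}$ with the same head, and $d_G(s,\cdot)$ stays continuous through a pivot (only the combinatorial shape of the tree changes); moreover, by the contiguity insight recalled before the statement, each directed arc of $G$ is a tree arc for a contiguous cyclic range of source positions along $C$, so the number of pivots while $s$ moves from $x_1$ to $x_k$ is $O(|E(G)|)=O(n)$, and the tree together with all these pivots is computed in $O(n\log n)$ time. From planarity: for any vertex $z$ and any source position on $C$, the vertices of $V(C)$ lying in the subtree of $z$ form a contiguous arc of $C$ --- the vertices of $V(C)$ appear in their cyclic $C$-order along an Euler tour of the tree, and a subtree is one contiguous stretch of that tour. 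In particular, while $s$ is interior to $e_i$, the ``$x_{i-1}$-side'' and the ``$x_i$-side'' meet $V(C)$ in two complementary arcs of $C$.

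To build $\Lambda_i$, fix $i$, let $0<\nu_1<\dots<\nu_{p_i}<1$ be the values of $\mu$ at which a pivot occurs while $s$ is interior to $e_i$, and set $\nu_0=0$ and $\nu_{p_i+1}=1$. On each open interval $(\nu_{t-1},\nu_t)$, $1\le t\le p_i+1$, the tree is fixed; let $A^{(t)}$ and $B^{(t)}=V(C)\setminus A^{(t)}$ be the two arcs of $C$ met by its $x_{i-1}$-side and its $x_i$-side. Over this interval every $d_G(s,x_j)$ with $x_j\in A^{(t)}$ grows by exactly $(\nu_t-\nu_{t-1})\,\ell(e_i)$ and every $d_G(s,x_j)$ with $x_j\in B^{(t)}$ shrinks by the same amount; since $d_G(s,\cdot)$ is continuous at the $\nu_t$, and $D$ equals $D_{i-1}$ at $\mu=0$ and $D_i$ at $\mu=1$, performing for $t=1,\dots,p_i+1$ the two updates ``add $+(\nu_t-\nu_{t-1})\,\ell(e_i)$ on $A^{(t)}$'' and ``add $-(\nu_t-\nu_{t-1})\,\ell(e_i)$ on $B^{(t)}$'' transforms $D_{i-1}$ into $D_i$. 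An arc of the cyclic order $x_1,\dots,x_k$ is a union of at most two blocks $\{a,a+1,\dots,b\}$ with $1\le a\le b\le k$, so each such update costs a constant number of $\DSadd$ calls; hence $|\Lambda_i|=O(p_i+1)$ and $\sum_{i=2}^k|\Lambda_i|=O\big(\sum_i p_i\big)+O(k)=O(n)$. Running the MSSP algorithm, reading off the scalars $(\nu_t-\nu_{t-1})\,\ell(e_i)$, and maintaining the extreme positions of the two subtrees of $s$ as pivots occur, all fit in $O(n\log n)$ time.

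The step I expect to be the main obstacle is the interface with the MSSP data structures: beyond answering distance queries they must expose, for each pivot, the scalar increment and the two arcs $A^{(t)},B^{(t)}$ --- equivalently the two extreme positions (in the cyclic order of $C$) of each of the two subtrees hanging from $s$ --- updated in amortized $O(\log n)$ time; and one must check that letting $s$ slide through a vertex of $C$ causes no jump in $D$, only a (possibly non-unique) recombination of the tree. Both points are in essence already present in~\cite{Klein05,CabelloCE13,DasKGW22}; packaging them as $\DSadd$ operations and bounding their number is then routine.
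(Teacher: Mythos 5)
Your proof is correct and follows essentially the same route as the paper: both arguments rest on the MSSP guarantee that the shortest-path tree undergoes only $O(n)$ pivots with a common head as the source traverses the facial cycle, and on the face-contiguity of $V(C)$ inside any subtree, so that each combinatorial change translates into $O(1)$ calls to $\DSadd$ on at most two contiguous index blocks. The only difference is the bookkeeping of the increments: the paper charges each swap $e=\dart{x}{z}\mapsto e'=\dart{y}{z}$ the discrete amount $d_{T'}(r,z)-d_T(r,z)$ on the arc of the moved subtree, plus a separate $\pm\ell(x_ix_{i+1})$ correction when the root slides, whereas you adopt the parametric view of~\cite{CabelloCE13} and add $\pm(\nu_t-\nu_{t-1})\,\ell(e_i)$ on the two complementary arcs on either side of the sliding source; the two accountings are equivalent, and both need the same dual-tree machinery to extract the arc endpoints in $O(\log n)$ time per event.
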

\begin{proof}
	Let $T$ be an arbitrary spanning tree of $G$ rooted at 
	a vertex $r$, let $e$ be an edge of $T$, and let $T(r,e)$
	be the subtree of $T-e$ that does not contain the root $r$.
	Because $G$ is plane and $C$ defines a face of $G$, the vertices of 
	$V(C)$ that appear in $T(r,e)$ form a continuous
	subsequence of $C$. See \cref{fig:continuous}.
	Note that the subsequence may be empty, meaning that
	$V(C)\cap V(T(r,e))$ is empty.
	Let $s(T,r,e)$ and $t(T,r,e)$ be the start and the end of the subsequence,
	if it is non-empty, such that
	the vertices $\{ x_{s(T,r,e)},x_{s(T,r,e)+1}, \dots,x_{t(T,r,e)}\}$ (indices modulo $k$)
	are $V(C)\cap V(T(r,e))$.
	
	\begin{figure}
	\centering
		\includegraphics[page=4,width=.9\textwidth]{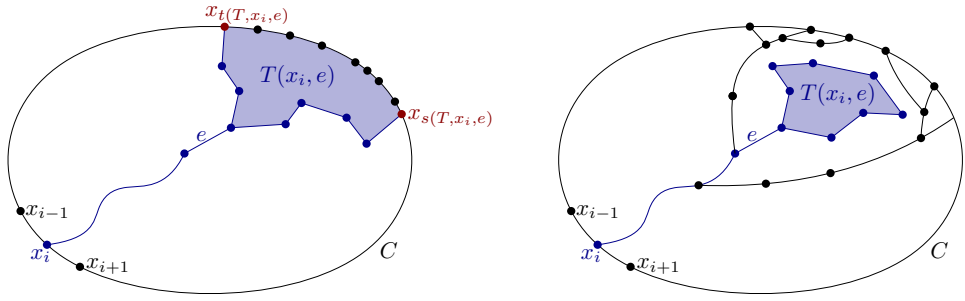}
		\caption{Schematic view of the interaction of $T(x_i,e)$ and the face defined
			by $C$. On the left, $V(T(x_i,e))\cap V(C)$ is not empty, but
			on the right, $V(T(x_i,e))\cap V(C)$ is empty.}
		\label{fig:continuous}
	\end{figure}

	Using data structures for dynamic forests, one 
	can maintain the tree $T$ under swapping of edges (removing an edge and inserting
	an edge while maintaining a tree) and changing the root such that: each update takes
	$O(\log n)$ time; the indices $s(T,x_i,e)$ and $t(T,x_i,e)$ can be obtained 
	in $O(\log n)$ time for any query $(x_i,e)\in V(C)\times E(T)$. 
	The basic idea for this is to maintain the spanning tree $T^*$ 
	of the dual graph that uses the edges dual to those in $E(G)\setminus E(T)$. 
	The two edges defining $s(T,x_i,e)$ and $t(T,x_i,e)$ are, in the dual tree,
	the only edges dual to the edges of $E(C)$ that lie on the unique path of $T^*$ 
	that connects the two faces bounded by $e$. It is easy to extend 
	data structures for dynamic trees~\cite{AlstrupHLT05,SleatorT83}
	to handle such type of queries. The edges dual to $E(C)$ 
	are marked as special in $T^*$ and we have to find the special
	edges that lie in a path of $T^*$ between two given nodes.
	See \cite{CabelloCE13,Klein05} for data structures with similar properties.

	Let $T$ be a tree, let $T'$ be the tree obtained from $T$
	by deleting the edge $e$ and inserting the edge $e'$,
	let $r$ be a vertex of $T$ and $T'$, and
	assume that $e$ in $T$ and $e'$ in $T'$ have a common target; 
	this means that in $T$ we have $e=\dart{x}{z}$
	and in $T'$ we have $e'=\dart{y}{z}$, where the vertex $z$ is the same.	
	See \cref{fig:swap}.
	Let $A=(a_j=d_T(r,x_j))_{j\in [k]}$ the sequence of 
	distances in $T$ from $r$ to $x_j$ (for $j\in [k])$,
	and let $A'=(a'_j=d_{T'}(r,x_j))_{j\in [k]}$ the sequence 
	of distances in $T'$ from $r$ to $x_j$ (for $j\in [k]$).
	Then, we can go from $A$ to $A'$ by performing $O(1)$ operations
	$\DSadd$. More precisely, for each vertex $x_j$ that lies 
	in $T(r,e)$ we have to decrease the value $a_j$ by $d_T(r,z)$
	and increase it by $d_{T'}(r,z)$. (Because $e$ and $e'$ have
	the same target we have $V(T(r,e))=V(T'(r,e'))$.) The values 
	$d_T(r,z)$ and $d_{T'}(r,z)$ can be recovered in $O(\log n)$ time
	from the dynamic data structures for storing $T$ (and thus $T'$).
	Each one of these operations is done with at most two operations
	$\DSadd$. For example, if $s(T,r,e)\le t(T,r,e)$, then we perform
	$\DSadd(s(T,r,e),t(T,r,e),-d_T(r,z))$, but if $s(T,r,e)> t(T,r,e)$, 
	we split it into the two updates
	$\DSadd(s(T,r,e),k,-d_T(r,z))$ and $\DSadd(1,t(T,r,e),-d_T(r,z))$.

	\begin{figure}
	\centering
		\includegraphics[page=5]{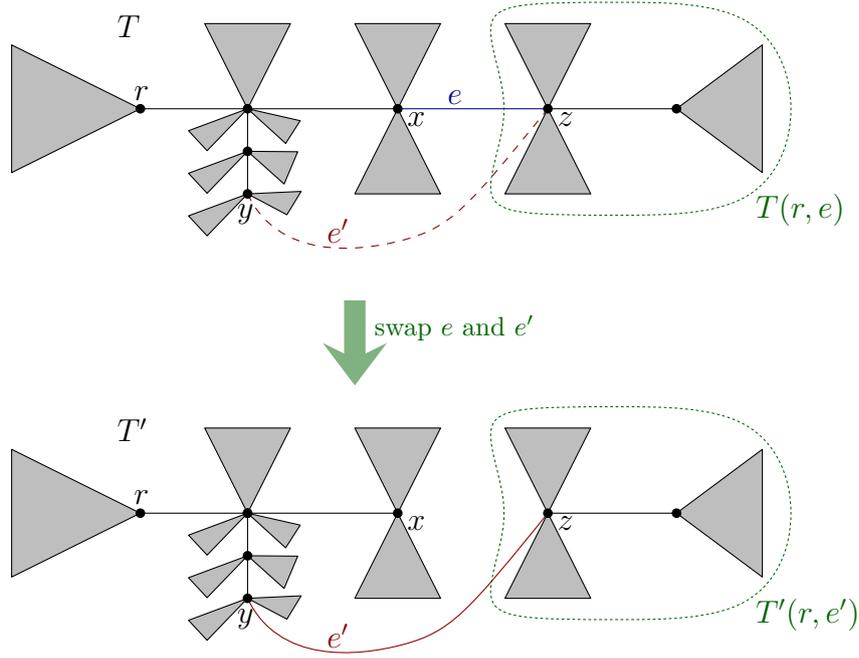}
		\caption{Swapping the edge $e$ and $e'$ to go from $T$
			to $T'$, when $e$ and $e'$ have the same target vertex.}
		\label{fig:swap}
	\end{figure}

	We note that \cite{CabelloCE13,Klein05}
	show how to maintain a shortest-path tree as the root moves along
	the boundary of the face defined by $C$ 
	performing $O(n)$ swaps of the type $T'=(T-e)+e'$ that
	have the property that $e$ and $e'$ have the same target vertex.
	Let $T_{i+1}$ be the resulting tree after we make the swaps
	to transform the shortest-path tree rooted at $x_i$ into the 
	shortest-path tree rooted at $x_{i+1}$.
	The tree $T_{i+1}$ is a shortest path tree from $x_{i+1}$, 
	but currently, the operations we performed in the sequence of numbers 
	encodes the distances in $T_{i+1}$ from $x_i$ to the vertices of $C$.	
	That is, we still have to transform the current sequence 
	$\big( d_{T_{i+1}}(x_i,x_j)\big)_{j\in[k]}$ into
	$\big( d_{T_{i+1}}(x_{i+1},x_j)\big)_{j\in[k]}$.
	We next explain how to do this.
	
	Let $y_1=x_i,\dots,y_b=x_{i+1}$ be the path in $T_{i+1}$ 
	from $x_i$ to $x_{i+1}$. The path may consist of a single edge 
	(if $x_ix_{i+1}\in E(T_{i+1})$) or multiple edges.
	We slide the root of $T_{i+1}$ from $y_1=x_i$ to $y_b=x_{i+1}$
	along this path and update the sequence of numbers to maintain
	the distance from the current root.
	That is, we maintain $\big( d_{T_{i+1}}(y_a,x_j)\big)_{j\in[k]}$
	for $a=1,\dots,b$.
	To this end, for $a=1,\dots b-1$, we add $\ell(y_a y_{a+1})$ to
	all the values of $V(C)\cap V(T_{i+1}(x_{i+1},y_a y_{a+1})$ 
	and subtract $\ell(y_a y_{a+1}))$ to all the values of 
	$V(C)\cap V(T_{i+1}(x_i,y_a y_{a+1}))$. Note that in this
	process the tree $T_{i+1}$ does not change, only the sequence
	of values is updated. The number of updates to the sequence is 
	$O(b)$ and it takes $O(b\log n)$ to compute the updates.

	Finally, as mentioned earlier, when the root of the shortest
	path tree moves along the facial cycle $C$, each directed edge 
	enters and exists at most once into the directed shortest-path tree, 
	if seen as oriented from the root~\cite{CabelloCE13,Klein05}.
	It follows, that the operation in the previous paragraph
	is performed at most twice for each edge of $G$ (once in each direction)
	when the root moves along the boundary of the face defined by $C$.
	The result follows.	
\end{proof}

\subsection{Testing}
We discuss now how to test if $H=\interior(G,C)$ is an isometric
subgraph of $G$.
Recall that $E(\Hout)=E(G)\setminus E(H)$. We add to $\Hout$
the edges of $C$ with very large weights, say $2 \sum_{e\in E(G)}\ell(e)$,
so that the new edges never appear in a shortest path.

Let $x_1,\dots,x_k$ be the vertices of $C$ ordered as they appear in $C$.
The cycle $C$ defines a face in $H$ and in $\Hout$
and therefore we can use \cref{le:encoding} for each of those graphs.
For $H$ we obtain the sequence of operations
$\Lambda_2,\dots,\Lambda_k$ that are needed to maintain
\[
	D_i = \Big( ~a_j=d_H(x_i,x_j) ~\Big)_{j\in [k]},
\]
while for $\Hout$ we obtain the sequence of operations
$\Lambda'_2,\dots,\Lambda'_k$ that are needed to maintain
\[
	D'_i = \Big( ~a'_j=d_{\Hout}(x_i,x_j) ~\Big)_{j\in [k]}.
\]
The objective now is to maintain the sequence of values
\[
	\tilde D_i = \Big( ~\tilde a_j=d_{\Hout}(x_i,x_j)-d_H(x_i,x_j)=a'_i-a_i ~\Big)_{j\in [k]}.
\]
as we move the root $x_i$ along $C$.
We start computing $\tilde D_1$ by using a shortest-path tree in $H$ and in $\Hout$
from $x_1$. We store the sequence $\tilde D_1$ using the data structure of 
\cref{le:datastructure-numbers}.

For $i=2$ to $i=k$, to move from $\tilde D_{i-1}$ to $\tilde D_i$
we have to perform the operations of $\Lambda'_i$ (as they are) 
and the operations of $\Lambda_i$ with reversed sign.
Each operation is done in $O(\log n)$ time per
operation using the data structure of \cref{le:datastructure-numbers}.
Whenever we obtain the sequence
$\tilde D_i$, we query the data structure with $\DSmin$ to
obtain 
\[
	m_i ~:=~ \min \tilde D_i  ~=~ \min\{ d_{\Hout}(x_i,x_j)- d_H(x_i,x_j)\mid j\in [k]\}.
\]
If some $m_i$ is negative, then we know that for some $x_j\in V(C)$ we have
\[
	d_{\Hout}(x_i,x_j) < d_H(x_i,x_j),
\]
and therefore $H$ is not an isometric subgraph of $G$.
If $m_i\ge 0$ for all $i\in [k]$, then we have
\[
	\forall i,j\in [k]: ~~~ d_{\Hout}(x_i,x_j) \ge d_H(x_i,x_j).
\]
Because $\partial H \subseteq V(C) =\{ x_1,\dots,x_k\}$,
it follows from \cref{le:isometric-boundary} that $H$
is an isometric subgraph in $G$.

Because of \cref{le:encoding} we spend $O(n\log n)$ time to 
know which $O(n)$ operations in the sequence of numbers have to
be performed. Since each operation in the sequence of numbers
takes $O(\log n)$ time (\cref{le:datastructure-numbers}), 
the total running time is $O(n\log n)$.

A similar procedure can be done for testing \emph{convexity} because of 
\cref{le:convex-boundary}:
the graph $H$ is convex in $G$ if and only if 
\[
	\forall i\in [k]:~~~ 
		\min\{ d_{\Hout}(x_i,x_j)- d_H(x_i,x_j)\mid j\in [k]\setminus \{i\}\} >0.
\]
For any $i\in [k]$, we can check whether 
$\min\{ d_{\Hout}(x_i,x_j)- d_H(x_i,x_j)\mid j\in [k]\setminus \{i\}\}$
is strictly positive using $\tilde D_i$; for example, we can add $1$ to
the value $\big(\tilde D_i\big)_i$ in the sequence $\tilde D_i$ (to make it 
non-zero, as otherwise 
$\big(\tilde D_i\big)_i = d_{\Hout}(x_i,x_i)- d_H(x_i,x_i)=0$), then
query for $\min \tilde D_i$, and then subtract $1$ to $\big(\tilde D_i\big)_i$
to restore $\tilde D_i$.
We summarize.

\begin{theorem}
	Let $G$ be a plane graph with $n$ vertices and let $H=\interior(G,C)$
	be a subgraph of $G$ enclosed by a cycle $C$ of $G$.
	In $O(n\log n)$ time we can test whether $H$ is an isometric subgraph of $G$
	and whether $H$ is a convex subgraph of $G$.
\end{theorem}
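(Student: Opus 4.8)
The plan is to reduce the problem to maintaining, as the root moves along the cycle $C$, the pointwise difference between the distances from the current root in $H = \interior(G,C)$ and the distances from the same root in $\Hout$ (augmented so that the edges of $C$ themselves are unusable in $\Hout$). The key structural fact is that $\partial H \subseteq V(C)$, so by \cref{le:isometric-boundary} (respectively \cref{le:convex-boundary}) it suffices to compare $d_H(x,y)$ and $d_{\Hout}(x,y)$ over all pairs $x,y$ of vertices of $C$. First I would add to $\Hout$ the edges of $C$ with a prohibitively large weight, say $2\sum_{e\in E(G)}\ell(e)$, so that $C$ is a facial cycle of $\Hout$ as well and these dummy edges never lie on a shortest path; this lets \cref{le:encoding} apply to both $H$ and $\Hout$, since $C$ is facial in each.

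Next I would invoke \cref{le:encoding} twice: once for $H$ to obtain the update sequences $\Lambda_2,\dots,\Lambda_k$ transforming the distance vector $D_{i-1}=(d_H(x_i,x_j))_j$ into $D_i$, and once for $\Hout$ to obtain $\Lambda'_2,\dots,\Lambda'_k$ for the vectors $D'_i=(d_{\Hout}(x_i,x_j))_j$. Each family of update sequences has $O(n)$ total operations and is computed in $O(n\log n)$ time. The quantity we actually track is $\tilde D_i = (d_{\Hout}(x_i,x_j)-d_H(x_i,x_j))_j$; because both sequences of operations consist solely of $\DSadd$ range-updates, the combined update moving $\tilde D_{i-1}$ to $\tilde D_i$ is obtained by applying the operations of $\Lambda'_i$ with their original signs and the operations of $\Lambda_i$ with negated $\Delta$. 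I would store $\tilde D_i$ in the data structure of \cref{le:datastructure-numbers}, initialize $\tilde D_1$ from one shortest-path-tree computation in each of $H$ and $\Hout$ from $x_1$, and after each batch of updates query $\DSmin$ to obtain $m_i := \min_j \big(d_{\Hout}(x_i,x_j)-d_H(x_i,x_j)\big)$.

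The correctness argument is then immediate: if some $m_i<0$ we have exhibited a pair $x_i,x_j\in V(C)$ with $d_{\Hout}(x_i,x_j)<d_H(x_i,x_j)$, so $H$ is not isometric by \cref{le:isometric-boundary}; if $m_i\ge 0$ for all $i$ then $d_{\Hout}(x,y)\ge d_H(x,y)$ for all $x,y\in V(C)\supseteq\partial H$, so \cref{le:isometric-boundary} certifies that $H$ is isometric. For convexity the same run suffices with the strict test: by \cref{le:convex-boundary}, $H$ is convex in $G$ if and only if $m_i>0$ for every $i\in[k]$. For the running time, \cref{le:encoding} contributes $O(n\log n)$ and produces $O(n)$ range-update operations; each such operation, together with the $k\le n$ $\DSmin$ queries, costs $O(\log n)$ by \cref{le:datastructure-numbers}; the two initial shortest-path trees and the preprocessing of $\Hout$ add another $O(n\log n)$. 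Hence the total is $O(n\log n)$.

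The step I expect to need the most care is the interface between the two applications of \cref{le:encoding}: one must check that the roots $x_1,\dots,x_k$ are indexed consistently in both $H$ and $\Hout$ (i.e.\ that the cyclic order of $C$ as a facial walk is the same from the point of view of both graphs, which it is, up to a global reversal that is harmless), and that the handling of the "root-shift" updates — adding $\pm\ell(x_ix_{i+1})$ to the appropriate contiguous blocks when passing from $x_i$ to $x_{i+1}$ — is done separately and coherently for the $H$-part and the $\Hout$-part before they are combined into the update of $\tilde D_i$. Everything else is a routine composition of the two cited lemmas.
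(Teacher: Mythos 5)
Your proposal is correct and follows essentially the same route as the paper's own proof: augmenting $\Hout$ with heavy copies of the edges of $C$ so that $C$ is facial in both graphs, applying \cref{le:encoding} to $H$ and $\Hout$, maintaining the difference sequence $\tilde D_i$ in the data structure of \cref{le:datastructure-numbers}, and reading off isometry (resp.\ convexity) from $m_i\ge 0$ (resp.\ $m_i>0$) via \cref{le:isometric-boundary} and \cref{le:convex-boundary}. No substantive differences to report.
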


The result can be extended to subgraphs of a plane graph that are defined
by a few cycles.
Assume that $C$ is a cycle of $G$ and $\mathcal{D}$ is a family of 
cycles such that each cycle of $\mathcal{D}$ is inside $C$,
and no cycle of $\mathcal{D}$ is inside another cycle of $\mathcal{D}$.
We can then define the graph $H=\Holes(G,C,\mathcal{D})$
as the subgraph induced by the edge set 
\[
	E(\interior(G,C))\setminus \bigcup_{C'\in \mathcal{D}} \Big( E(\interior(G,C'))\setminus E(C')\Big).
\]
The idea is that $H$ is the graph contained in the closure of the region
having the curves defined by $C$ and the cycles of $\mathcal{D}$ on the boundary. 
See \cref{fig:example}. Note that $\interior(G,C)=\Holes(G,C,\emptyset)$.
If $\mathcal{D}$ has a constant number of cycles,
then we can use a similar method for each of the cycles in $\mathcal{D}$.
A simple way to do this is to re-embed the graph $G$ such that the 
cycle under consideration is the one that contains the others.
The method then takes $O(|\mathcal{D}| n\log n)$ time.

To finalize this section, we note that the criteria using the Wiener index
with marked vertices (\cref{le:isometric-sum-boundary,le:convex-sum-boundary})
does not seem to lead to near-linear time in this setting.
If $H=\interior(G,C)$, we can adapt the algorithms of 
\cite{CabelloCE13,DasKGW22,Klein05} to compute $\sum_{x,y\in V(C)}d_H(x,y)$
in $O(n\log n)$ time.
However, we do not know how to compute $\sum_{x,y\in V(C)}d_G(x,y)$
so efficiently.
This is the reason that we have turned our attention to the different criteria
given in \cref{le:isometric-boundary,le:convex-boundary},
where the difference of distances is considered. We then leverage
that the distances between the vertices of $V(C)$ can be maintained efficiently
in $H$ and $G-E(H)$.

\bibliographystyle{plainurl}
\bibliography{biblio}

@inproceedings{Cabello25,
  author       = {Sergio Cabello},
  title        = {Testing Whether a Subgraph Is Convex or Isometric},
  booktitle    = {19th International Symposium on Algorithms and Data Structures, {WADS} 2025},
  series       = {LIPIcs},
  pages        = {12:1--12:16},
  year         = {2025},
  doi          = {10.4230/LIPICS.WADS.2025.12},
}

@inproceedings{KarczmarzZ25,
  author       = {Adam Karczmarz and Da Wei Zheng},
  title        = {Subquadratic algorithms in minor-free digraphs: (weighted) distance
                  oracles, decremental reachability, and more},
  booktitle    = {Proceedings of the 2025 {ACM-SIAM} Symposium on Discrete Algorithms, {SODA}},
  pages        = {4338--4351},
  year         = {2025},
  doi          = {10.1137/1.9781611978322.147},
}

@article{BandeltC02,
  author       = {Hans{-}J{\"{u}}rgen Bandelt and Victor Chepoi},
  title        = {Graphs with Connected Medians},
  journal      = {{SIAM} J. Discret. Math.},
  volume       = {15},
  number       = {2},
  pages        = {268--282},
  year         = {2002},
  doi          = {10.1137/S089548019936360X},
}

@article{BeneteauCCV22,
  author       = {Laurine B{\'{e}}n{\'{e}}teau and J{\'{e}}r{\'{e}}mie Chalopin and
                  Victor Chepoi and Yann Vax{\`{e}}s},
  title        = {Medians in median graphs and their cube complexes in linear time},
  journal      = {J. Comput. Syst. Sci.},
  volume       = {126},
  pages        = {80--105},
  year         = {2022},
  doi          = {10.1016/J.JCSS.2022.01.001},
}

@article{BerleantSCWB23,
  author       = {Joseph Berleant and Kristin Sheridan and Anne Condon and
                  Virginia {Vassilevska Williams} and Mark Bathe},
  title        = {Isometric {H}amming embeddings of weighted graphs},
  journal      = {Discret. Appl. Math.},
  volume       = {332},
  pages        = {119--128},
  year         = {2023},
  doi          = {10.1016/J.DAM.2023.02.005},
}

@article{SheridanBBCW23,
  author       = {Kristin Sheridan and Joseph Berleant and Mark Bathe and Anne Condon and
                  Virginia {Vassilevska Williams}},
  title        = {Factorization and pseudofactorization of weighted graphs},
  journal      = {Discret. Appl. Math.},
  volume       = {337},
  pages        = {81--105},
  year         = {2023},
  doi          = {10.1016/J.DAM.2023.04.019},
}

@article{GlantzM17,
  author       = {Roland Glantz and Henning Meyerhenke},
  title        = {On finding convex cuts in general, bipartite and plane graphs},
  journal      = {Theor. Comput. Sci.},
  volume       = {695},
  pages        = {54--73},
  year         = {2017},
  doi          = {10.1016/j.tcs.2017.07.026},
}

@book{HammackIK,
    AUTHOR = {Hammack, Richard and Imrich, Wilfried and Klav{\v z}ar, Sandi},
     TITLE = {Handbook of product graphs},
   EDITION = {Second},
 PUBLISHER = {CRC Press, Boca Raton, FL},
      YEAR = {2011},
}

@article{ImrichP07,
  author       = {Wilfried Imrich and Iztok Peterin},
  title        = {Recognizing {C}artesian products in linear time},
  journal      = {Discret. Math.},
  volume       = {307},
  number       = {3-5},
  pages        = {472--483},
  year         = {2007},
  doi          = {10.1016/j.disc.2005.09.038},
}

@article{Williams05,
  author       = {Ryan Williams},
  title        = {A new algorithm for optimal 2-constraint satisfaction and its implications},
  journal      = {Theor. Comput. Sci.},
  volume       = {348},
  number       = {2-3},
  pages        = {357--365},
  year         = {2005},
  doi          = {10.1016/j.tcs.2005.09.023},
}

@article{ImpagliazzoP01,
  author       = {Russell Impagliazzo and Ramamohan Paturi},
  title        = {On the Complexity of k-{SAT}},
  journal      = {J. Comput. Syst. Sci.},
  volume       = {62},
  number       = {2},
  pages        = {367--375},
  year         = {2001},
  doi          = {10.1006/jcss.2000.1727},
}

@article{ImpagliazzoPZ01,
  author       = {Russell Impagliazzo and Ramamohan Paturi and Francis Zane},
  title        = {Which Problems Have Strongly Exponential Complexity?},
  journal      = {J. Comput. Syst. Sci.},
  volume       = {63},
  number       = {4},
  pages        = {512--530},
  year         = {2001},
  doi          = {10.1006/jcss.2001.1774},
}

@article{ChangatMS05,
  author       = {Manoj Changat and Henry Martyn Mulder and Gerard Sierksma},
  title        = {Convexities related to path properties on graphs},
  journal      = {Discret. Math.},
  volume       = {290},
  number       = {2/3},
  pages        = {117--131},
  year         = {2005},
  doi          = {10.1016/j.disc.2003.07.014},
}

@article{Thorup99,
  author       = {Mikkel Thorup},
  title        = {Undirected Single-Source Shortest Paths with Positive Integer Weights in Linear Time},
  journal      = {J. {ACM}},
  volume       = {46},
  number       = {3},
  pages        = {362--394},
  year         = {1999},
  doi          = {10.1145/316542.316548},
}

@inproceedings{PettieR02,
  author       = {Seth Pettie and Vijaya Ramachandran},
  title        = {Computing shortest paths with comparisons and additions},
  booktitle    = {Proceedings of the Thirteenth Annual {ACM-SIAM} Symposium on Discrete Algorithms},
  pages        = {267--276},
  year         = {2002},
  url          = {http://dl.acm.org/citation.cfm?id=545381.545417},
}

@inproceedings{DasKGW22,
  author       = {Debarati Das and Evangelos Kipouridis and Maximilian Probst Gutenberg and Christian Wulff{-}Nilsen},
  title        = {A Simple Algorithm for Multiple-Source Shortest Paths in Planar Digraphs},
  booktitle    = {5th Symposium on Simplicity in Algorithms, {SOSA}},
  pages        = {1--11},
  year         = {2022},
  doi          = {10.1137/1.9781611977066.1},
}

@article{AlstrupHLT05,
 author               = {Stephen Alstrup and Jacob Holm and Kristian {de Lichtenberg} and Mikkel Thorup},
 journal              = {{ACM} Trans. Algorithms},
 number               = {2},
 pages                = {243--264},
 title                = {Maintaining information in fully dynamic trees with top trees},
 doi                  = {10.1145/1103963.1103966},
 volume               = {1},
 year                 = {2005},
 }

@article{CabelloCE13,
 author               = {Sergio Cabello and Erin W. Chambers and Jeff Erickson},
 journal              = {{SIAM} J. Comput.},
 number               = {4},
 pages                = {1542--1571},
 title                = {Multiple-Source Shortest Paths in Embedded Graphs},
 doi                  = {10.1137/120864271},
 volume               = {42},
 year                 = {2013},
 }

@article{ChoiCA21,
 author               = {Jongmin Choi and Sergio Cabello and Hee{-}Kap Ahn},
 journal              = {Algorithmica},
 number               = {11},
 pages                = {3491--3513},
 title                = {Maximizing Dominance in the Plane and its Applications},
 doi                  = {10.1007/s00453-021-00863-2},
 volume               = {83},
 year                 = {2021},
 }

@inproceedings{Klein05,
 author               = {Philip N. Klein},
 booktitle            = {{ACM-SIAM} Symposium on Discrete Algorithms, {SODA} 2005},
 pages                = {146--155},
 title                = {Multiple-source shortest paths in planar graphs},
 url                  = {http://dl.acm.org/citation.cfm?id=1070432.1070454},
 year                 = {2005},
 }

@article{FredmanT87,
  author       = {Michael L. Fredman and Robert E. Tarjan},
  title        = {{F}ibonacci heaps and their uses in improved network optimization algorithms},
  journal      = {J. {ACM}},
  volume       = {34},
  number       = {3},
  pages        = {596--615},
  year         = {1987},
  doi          = {10.1145/28869.28874},
}

@article{SleatorT83,
 author               = {Daniel D. Sleator and Robert E. Tarjan},
 journal              = {Journal of Computer and System Sciences},
 number               = {3},
 pages                = {362--391},
 title                = {A data structure for dynamic trees},
 doi                  = {10.1016/0022-0000(83)90006-5},
 volume               = {26},
 year                 = {1983},
 }

@inproceedings{RodittyW13,
  author       = {Liam Roditty and Virginia {Vassilevska Williams}},
  title        = {Fast approximation algorithms for the diameter and radius of sparse graphs},
  booktitle    = {Symposium on Theory of Computing Conference, STOC'13},
  pages        = {515--524},
  year         = {2013},
  doi          = {10.1145/2488608.2488673},
}

@article{Cabello22,
  author       = {Sergio Cabello},
  title        = {Computing the Inverse Geodesic Length in Planar Graphs and Graphs of Bounded Treewidth},
  journal      = {{ACM} Trans. Algorithms},
  volume       = {18},
  number       = {2},
  pages        = {14:1--14:26},
  year         = {2022},
  doi          = {10.1145/3501303},
}

@article{BringmannHM20,
  author       = {Karl Bringmann and Thore Husfeldt and M{\aa}ns Magnusson},
  title        = {Multivariate Analysis of Orthogonal Range Searching and Graph Distances},
  journal      = {Algorithmica},
  volume       = {82},
  number       = {8},
  pages        = {2292--2315},
  year         = {2020},
  doi          = {10.1007/s00453-020-00680-z},
}

@article{CabelloK09,
  author       = {Sergio Cabello and Christian Knauer},
  title        = {Algorithms for graphs of bounded treewidth via orthogonal range searching},
  journal      = {Comput. Geom.},
  volume       = {42},
  number       = {9},
  pages        = {815--824},
  year         = {2009},
  doi          = {10.1016/j.comgeo.2009.02.001},
}

@incollection {BC08,
    AUTHOR = {Bandelt, Hans-J\"urgen and Chepoi, Victor},
     TITLE = {Metric graph theory and geometry: a survey},
 BOOKTITLE = {Surveys on discrete and computational geometry},
    SERIES = {Contemp. Math.},
    VOLUME = {453},
     PAGES = {49--86},
 PUBLISHER = {Amer. Math. Soc., Providence, RI},
      YEAR = {2008},
       DOI = {10.1090/conm/453/08795},
}

@book {Pel13,
    AUTHOR = {Pelayo, Ignacio M.},
     TITLE = {Geodesic convexity in graphs},
    SERIES = {SpringerBriefs in Mathematics},
 PUBLISHER = {Springer, New York},
      YEAR = {2013},
     PAGES = {viii+112},
       DOI = {10.1007/978-1-4614-8699-2},
}

@inproceedings{TG21,
 author 	= {Thiessen, Maximilian and Gaertner, Thomas},
 booktitle 	= {Advances in Neural Information Processing Systems},
 pages 		= {23413--23425},
 title 		= {Active Learning of Convex Halfspaces on Graphs},
 url = {https://proceedings.neurips.cc/paper_files/paper/2021/file/c4bf1e24f3e6f92ca9dfd9a7a1a1049c-Paper.pdf},
 volume = {34},
 year = {2021}
}

@article{DouradoGKPS09,
  author       = {Mitre Costa Dourado and John G. Gimbel and Jan Kratochv{\'{\i}}l and
                  F{\'{a}}bio Protti and Jayme Luiz Szwarcfiter},
  title        = {On the computation of the hull number of a graph},
  journal      = {Discret. Math.},
  volume       = {309},
  number       = {18},
  pages        = {5668--5674},
  year         = {2009},
  doi          = {10.1016/j.disc.2008.04.020},
}

@article{MarcS18,
  author       = {Tilen Marc and Lovro {\v S}ubelj},
  title        = {Convexity in complex networks},
  journal      = {Netw. Sci.},
  volume       = {6},
  number       = {2},
  pages        = {176--203},
  year         = {2018},
  doi          = {10.1017/nws.2017.37},
}

@article{SubeljFCK19,
  author       = {Lovro {\v S}ubelj and Dalibor Fiala and Tadej Ciglari{\v c} and Luka Kronegger},
  title        = {Convexity in scientific collaboration networks},
  journal      = {J. Informetrics},
  volume       = {13},
  number       = {1},
  pages        = {10--31},
  year         = {2019},
  doi          = {10.1016/j.joi.2018.11.005},
}

@article{GavrilevM23,
  author       = {Dmitrii Gavrilev and Ilya Makarov},
  title        = {Fast Approximate Convex Hull Construction in Networks via Node Embedding},
  journal      = {{IEEE} Access},
  volume       = {11},
  pages        = {54588--54595},
  year         = {2023},
  doi          = {10.1109/ACCESS.2023.3281337},
}

@article{ImrichK98,
  author       = {Wilfried Imrich and Sandi Klav{\v z}ar},
  title        = {A Convexity Lemma and Expansion Procedures for Bipartite Graphs},
  journal      = {Eur. J. Comb.},
  volume       = {19},
  number       = {6},
  pages        = {677--685},
  year         = {1998},
  doi          = {10.1006/eujc.1998.0229},
}

@article{MulderS79,
  author       = {Henry Martyn Mulder and Alexander Schrijver},
  title        = {Median graphs and {H}elly hypergraphs},
  journal      = {Discret. Math.},
  volume       = {25},
  number       = {1},
  pages        = {41--50},
  year         = {1979},
  doi          = {10.1016/0012-365X(79)90151-1},
}

@article{KlavzarM99,
  author       = {Sandi Klav{\v z}ar and Henry Martyn Mulder},
  title        = {Median Graphs: Characterizations, Location Theory and Related Structures},
  journal      = {Journal of Combinatorial Mathematics and Combinatorial Computing},
  volume       = {30},
  pages        = {103--127},
  year         = {1999},
}

@article{ImrichKM99,
  author       = {Wilfried Imrich and Sandi Klav{\v z}ar and Henry Martyn Mulder},
  title        = {Median Graphs and Triangle-Free Graphs},
  journal      = {{SIAM} J. Discret. Math.},
  volume       = {12},
  number       = {1},
  pages        = {111--118},
  year         = {1999},
  doi          = {10.1137/S0895480197323494},
}

@article{Mulder78,
  author       = {Martyn Mulder},
  title        = {The structure of median graphs},
  journal      = {Discret. Math.},
  volume       = {24},
  number       = {2},
  pages        = {197--204},
  year         = {1978},
  doi          = {10.1016/0012-365X(78)90199-1},
}

@article{Nebesky71,
author 	= {Nebesk{\'y}, Ladislav},
journal = {Commentationes Mathematicae Universitatis Carolinae},
number = {2},
pages = {317-325},
publisher = {Charles University in Prague, Faculty of Mathematics and Physics},
title = {Median graphs},
volume = {12},
year = {1971},
}

@article{Eppstein11,
  author       = {David Eppstein},
  title        = {Recognizing Partial Cubes in Quadratic Time},
  journal      = {J. Graph Algorithms Appl.},
  volume       = {15},
  number       = {2},
  pages        = {269--293},
  year         = {2011},
  doi          = {10.7155/jgaa.00226},
}

@article{Firsov65,
  author       = {V.V. Firsov},
  title        = {Isometric embedding of a graph in a {B}oolean cube},
  journal      = {Cybernetics (Kibernetyka)},
  volume       = {1},
  pages        = {112--113},
  year         = {1965},
  doi          = {10.1007/BF01074705},
}

@article{Chepoi1988,
  author       = {Victor D. Chepoi},
  title        = {Isometric subgraphs of {H}amming graphs and d-convexity},
  journal      = {Cybernetics (Kibernetyka)},
  volume       = {24},
  pages        = {6--11},
  year         = {1988},
  doi          = {10.1007/BF01069520},
}

@article{Winkler84,
  author       = {Peter M. Winkler},
  title        = {Isometric embedding in products of complete graphs},
  journal      = {Discret. Appl. Math.},
  volume       = {7},
  number       = {2},
  pages        = {221--225},
  year         = {1984},
  doi          = {10.1016/0166-218X(84)90069-6},
}

@article{Djokovic1973,
  title={Distance-preserving subgraphs of hypercubes},
  author={Dragomir Ž. Djokovi{\'c}},
  journal={Journal of Combinatorial Theory, Series B},
  year={1973},
  volume={14},
  pages={263--267},
  doi	={10.1016/0095-8956(73)90010-5}
}

@Inbook{Ovchinnikov2011,
author		="Ovchinnikov, Sergei",
title		="Partial Cubes",
bookTitle	="Graphs and Cubes",
year		="2011",
publisher	="Springer New York",
chapter		={5},
pages		="127--181",
doi			="10.1007/978-1-4614-0797-3_5",
}

@article{Cabello19,
  author       = {Sergio Cabello},
  title        = {Subquadratic Algorithms for the Diameter and the Sum of Pairwise Distances in Planar Graphs},
  journal      = {{ACM} Trans. Algorithms},
  volume       = {15},
  number       = {2},
  pages        = {21:1--21:38},
  year         = {2019},
  doi          = {10.1145/3218821},
}

@article{GawrychowskiKMS21,
  author       = {Pawel Gawrychowski and Haim Kaplan and Shay Mozes and
                  Micha Sharir and Oren Weimann},
  title        = {{V}oronoi Diagrams on Planar Graphs, and Computing the Diameter in Deterministic {$\tilde{O}(n^{5/3})$} Time},
  journal      = {{SIAM} J. Comput.},
  volume       = {50},
  number       = {2},
  pages        = {509--554},
  year         = {2021},
  doi          = {10.1137/18M1193402},
}

@inproceedings{LeW24,
  author       = {Hung Le and Christian Wulff{-}Nilsen},
  title        = {{VC} Set Systems in Minor-free (Di)Graphs and Applications},
  booktitle    = {Proceedings of the 2024 {ACM-SIAM} Symposium on Discrete Algorithms, {SODA}},
  pages        = {5332--5360},
  year         = {2024},
  doi          = {10.1137/1.9781611977912.192},
}

@article{DucoffeHV22,
  author       = {Guillaume Ducoffe and Michel Habib and Laurent Viennot},
  title        = {Diameter, Eccentricities and Distance Oracle Computations on {$H$}-Minor
                  Free Graphs and Graphs of Bounded (Distance) {V}apnik-{C}hervonenkis Dimension},
  journal      = {{SIAM} J. Comput.},
  volume       = {51},
  number       = {5},
  pages        = {1506--1534},
  year         = {2022},
  doi          = {10.1137/20M136551X},
}
\end{document}